\newcommand {\ctp}{\citep}       
\numberwithin{equation}{section}
\theoremstyle{plain}
\newtheorem{theorem}{Theorem}[section]
\newtheorem{remark}{Remark}[section]
\newcommand{\bvarepsilon}{\boldsymbol{\varepsilon}}
\newcommand{\ba}{\boldsymbol{a}}
\newcommand{\bb}{\boldsymbol{b}}
\newcommand{\bof}{\boldsymbol{f}}
\newcommand{\boeta}{\boldsymbol{\eta}}
\newcommand{\balpha}{\boldsymbol{\alpha}}
\newcommand{\brho}{\boldsymbol{\rho}}
\newcommand{\bPhi}{\boldsymbol{\Phi}}
\newcommand{\bXi}{\boldsymbol{\Xi}}
\newcommand{\bxi}{\boldsymbol{\xi}}
\newcommand{\bV}{\boldsymbol{V}}
\newcommand{\bI}{\boldsymbol{I}}
\newcommand{\bS}{\boldsymbol{S}}
\newcommand{\bQ}{\boldsymbol{Q}}
\newcommand{\bW}{\boldsymbol{W}}
\newcommand{\bu}{\boldsymbol{u}}
\newcommand{\bv}{\boldsymbol{v}}
\newcommand{\bs}{\boldsymbol{s}}
\newcommand{\bU}{\boldsymbol{U}}
\newcommand{\bx}{\boldsymbol{x}}
\newcommand{\bX}{\boldsymbol{X}}
\newcommand{\by}{\boldsymbol{y}}
\newcommand{\bY}{\boldsymbol{Y}}
\newcommand{\btimes}{\boldsymbol{\times}}
\newcommand{\bnabla}{\boldsymbol{\nabla}}
\begin{document}

\begin{frontmatter}
\title{Inverse Bayesian Estimation of Gravitational Mass Density in Galaxies from Missing Kinematic Data} 
\runtitle{Inverse Bayesian Estimation of Gravitational Mass Density in Galaxies}

\begin{aug}
\author{
{\fnms{Dalia} \snm{Chakrabarty}\thanksref{t1,t2,m1,m2}\ead[label=e1]{dc252@le.ac.uk}\ead[label=e2]{d.chakrabarty@warwick.ac.uk}},\fnms{Prasenjit} \snm{Saha}\thanksref{t3,m3}\ead[label=e3]{psaha@physik.uzh.ch}},
\thankstext{t1}{Lecturer in Statistics at Department of Mathematics, 
University of Leicester} 
\thankstext{t2}{Associate Research Fellow in Department of Statistics, 
University of Warwick} 
\thankstext{t3}{Lecturer in Institute for Theoretical Physics, 
University of Zurich} 

\runauthor{Chakrabarty $\&$ Saha}

\affiliation{University of Leicester\thanksmark{m1} \and University of Warwick\thanksmark{m2}, University of Zurich\thanksmark{m3}}

\address{Department of Mathematics\\
University of Leicester\\
Leicester LE1 3RH, 
U.K.\\
\printead*{e1}\\
\and\\
Department of Statistics\\
University of Warwick \\
Coventry CV4 7AL,
U.K.\\
\printead*{e2}\\
Institute for Theoretical Physics\\
University of Zurich\\
Winterthurerstr 190\\
8057 Zurich\\
Switzerland\\
\printead*{e3}
}
\end{aug}

\begin{abstract}
{ 
In this paper we focus on a type of inverse problem in which
the data is expressed as an unknown function of the sought and unknown
model function (or its discretised representation as a model parameter
vector). In particular, we deal with situations in which training data
is not available. Then we cannot model the unknown functional
relationship between data and the unknown model function (or
parameter vector) with a Gaussian Process of appropriate dimensionality. A
Bayesian method based on state space modelling is advanced
instead. Within this framework, the likelihood is expressed in terms
of the probability density function ($pdf$) of the state space
variable and the sought model parameter vector is embedded within the
domain of this $pdf$. As the measurable vector lives only inside an
identified sub-volume of the system state space, the $pdf$ of the
state space variable is projected onto the space of the measurables,
and it is in terms of the projected state space density that the
likelihood is written; the final form of the likelihood is achieved
after convolution with the distribution of measurement
errors. Application motivated vague priors are invoked and the
posterior probability density of the model parameter vectors, given
the data is computed. Inference is performed by taking posterior
samples with adaptive MCMC. The method is illustrated on synthetic as
well as real galactic data.\\[+2mm] {\bf Keywords}: Bayesian Inverse
Problems; State Space Modelling; Missing Data; Dark Matter in
Galaxies; Adaptive MCMC.}
\end{abstract}

\begin{keyword}
\kwd{Bayesian Inverse Problems}
\kwd{State Space Modelling}
\kwd{Missing Data}
\kwd{Dark Matter in Galaxies}
\kwd{Adaptive MCMC}
\end{keyword}

\end{frontmatter}

\section{Introduction}
\label{sec:inv}
\noindent
The method of science calls for the understanding of
selected aspects of behaviour of a considered system, given
available measurements and other relevant information. The
measurements may be of the variable $\bW$ ($\bW\in{\cal
  W}\subseteq{\mathbb R}^m$) while the parameters that define the
selected system behaviour may be $\brho$, ($\brho\in{\cal
  B}\subseteq{\mathbb R}^d$) or the selected system behaviour can
itself be an unknown and sought function $\rho(\cdot)$ of the known
input variable vector $\bX$ ($\bX\in{\cal X}\subseteq{\mathbb R}^d$),
so that $\rho:{\cal X}\longrightarrow{\cal R}\subseteq{\mathbb R}$. In
either case, we relate the measurements with the model of the system
behaviour as in the equation $\bW = \bxi(\brho)$ or $\bW =
\bxi(\rho(\bX))$ where the function $\bxi(\cdot)$ is
unknown. Alternatively, in either case the scientist aims to solve an
inverse problem in which the operator $\bxi^{-1}$, when operated upon
the data, yields the unknown(s).

One problem that then immediately arises is the learning of the unknown function $\bxi(\cdot)$. Indeed $\bxi(\cdot)$ is often unknown though such is not the norm--for example in applications in which the data is generated by a known projection of the model function onto the space ${\cal W}$ of the measurables, $\bxi(\cdot)$ is identified as this known projection. Thus, image inversion is an example of an inverse problem in which the data is a known function of the unknown model function or model parameter vector \ctp[among others]{jugnon,qui2008,bereto,radon_xrays,bookblind}. On the other hand, there can arise a plethora of other situations in science in which a functional relationship between the measurable $\bW$ and unknown $\rho(\bX)$ (or $\brho$) is appreciated but the exact form of this functional relationship is not known \ctp[to cite a few]{parker,tarantola_siam,andrew,andrewreview,draper,tidal,gouveia}.

This situation allows for a (personal) classification 
of inverse problems such that
\begin{itemize}
\item in inverse problems of Type~I, $\bxi(\cdot)$ is known where $\bW = \bxi(\brho)$ or $\bW = \bxi(\rho(\bX))$,
\item in inverse problems of Type~II, $\bxi(\cdot)$ is unknown.
\end{itemize}
While inverse problems of Type~I can be rendered difficult owing to
these being ill-posed and/or ill-conditioned as well as in the
quantification of the uncertainties in the estimation of the
unknown(s), inverse problems of Type~II appear to be entirely
intractable in the current formulation of $\bW = \bxi(\brho)$ (or $\bW
= \bxi(\rho(\bX))$), where the aim is the learning of the unknown
$\brho$ (or $\rho(\bX)$), given the data. In fact, conventionally,
this very general scientific problem would not even be treated as an
inverse problem but rather as a modelling exercise specific to the
relevant scientific discipline. From the point of view of inverse
problems, these entail another layer of learning, namely, the learning
of $\bxi(\cdot)$ from the data--to be precise, from {\it training
  data} \ctp{training_geo,Caers,astro}. Here by training data we mean
data that comprises values of $\bW$ at chosen values of $\rho(\bx)$
(or at chosen $\brho$). These chosen (and therefore known) values of
$\rho(\bx)$ (or $\brho$) are referred to as the design points, so that
values of $\bW$ generated for the whole design set comprise the
training data.Having trained the model for $\bxi(\cdot)$ using such
training data, we then implement this learnt model on the available
measurements--or test data--to learn that value of $\brho$ (or
$\rho(\bx)$) at which the measurements are realised.

It is in principle possible to generate a training data set from
surveys (as in selected social science applications) or generate
synthetic training data sets using simulation models of the
system \ctp{simgeo, astrosim, atmos}. However, often the Physics of the situation is
such that $\bxi(\cdot)$ is rendered characteristic of the system
at hand (as in complex physical and biological systems). Consequently,
a simulation model of the considered system is only an approximation
of the true underlying Physics and therefore risky in general; after
all, the basic motivation behind the learning of the unknown
$\rho(\bx)$ (or $\brho$) is to learn the underlying system Physics, and
pivoting such learning on a simulation model that is of unquantifiable
crudeness, may not be useful.

Thus, in such cases, we need to develop an alternative way of learning
$\bxi(\cdot)$ or if possible, learn the unknown $\rho(\bx)$ (or
$\brho$) given the available measurements without needing to know
$\bxi(\cdot)$. It may appear that such is possible in the Bayesian
approach in which we only need to write the posterior probability
density of the unknown $\rho(\bx)$ (or $\brho$), given the data. An
added advantage of using the Bayesian framework is that extra
information is brought into the model via the priors, thus reducing
the quantity of data required to achieve inference of a given
quality. Importantly, in this approach one can readily achieve
estimation of uncertainties in the relevant parameters, as
distinguished from point estimates of the same. In this paper we
present the Bayesian learning of the unknown model parameters given
the measurements but no training data, as no training data set is
available. The presented methodology is inspired by state space
modelling techniques and is elucidated using an application to
astronomical data.

The advantages of the Bayesian framework notwithstanding, in systems
in which training data is unavailable, fact remains that $\bxi(\cdot)$
cannot be learnt. This implies that if learning of the unknown
$\rho(\bx)$ (or $\brho$) is attempted by modelling $\bxi(\cdot)$ as a
realisation from a stochastic process (such as a Gaussian Process
(${\cal GP}$) or Ito Process or t-process, etc.), then the correlation
structure that underlies this process is not known. However, in this
learning approach, the posterior probability of the unknowns given the
data invokes such a correlation structure. Only by using training data
can we learn the covariance of the process that $\bxi(\cdot)$
is sampled from, leading to our formulation of the posterior of the
unknowns, given the measured data as well as the training data. To
take the example of modelling $\bxi(\cdot)$ using a high-dimensional
${\cal GP}$, it might be possible of course to impose the form of the
covariance by hand; for example, when it is safe to assume that
$\bxi(\cdot)$ is continuous, we could choose a stationary covariance
function \ctp{rasmussen}, such as the popular square exponential
covariance or the Matern class of covariance functions
\ctp{materngneiting}, though parameters of such a covariance
(correlation length, smoothness parameter) being unknown, ${\it
  ad\:\:hoc}$ values of these will then need to be imposed. In the
presence of training data, the smoothness parameters can be learnt
from the data.

For systems in which the continuous assumption is misplaced, choosing
an appropriate covariance function and learning the relevant
parameters from the measured data, in absence of training data,
becomes even trickier. An example of this situation can arise in fact
in an inverse problem of Type~I--the unknown physical density of the
system is projected onto the space of observables such that inversion
of the available (noisy) image data will allow for the estimation of
the unknown density, where the projection operator is known. Such a
density function in real systems can often have disjoint support in
its domain and can also be typically characterised by sharp density
contrasts as in material density function of real-life material
samples \ctp{CRP}. Then, if we were to model this discontinuous and
multimodal density function as a realisation from a ${\cal GP}$, the
covariance function of such a process will need to be
non-stationary. It is possible to render a density function sampled
from such a ${\cal GP}$ to be differently differentiable at different
points, using for example prescriptions advanced in the literature
\ctp{paciorek}, but in lieu of training data it is not possible to
parametrise covariance kernels to ensure the representative
discontinuity and multimodality of the sampled (density)
functions. Thus, the absence of training data leads to the inability
to learn the correlation structure of the density function given the
measured image data.

A way out this problem could be to make an attempt to construct a
training data set by learning values of the unknown system behaviour
function at those points in the domain of the density, at which
measured data are available; effectively, we then have a set of data
points, each generated at a learnt value of the function, i.e. this
set comprises a training data. In this data set there are measurement
uncertainties as well as uncertainty of estimation on each of the
learnt values of the system function. Of course, learning the value of
the function at identified points within the domain of the system
function, is in itself a difficult task. Thus, in this paradigm, the
domain ${\cal X}\subseteq{\mathbb R}^d$ of the unknown system function
$\rho(\bx)$ is discretised according to the set of values of $\bX$,
$\{\bx_1,\bx_2,\ldots,\bx_n\}$, at which the $n$ measurements are
available. In other words, the discretisation of ${\cal X}$ is dictated
by the data distribution. Over each $\bX$-bin, the function
$\rho(\bx)$ is held a constant such that for $\bX$ in the $i$-th bin,
the function takes the value $\rho_i$, $i=1,2,\ldots,n$; then we
define $\brho:=(\rho_1,\rho_2,\ldots,\rho_n)^T$ and try to learn this
vector, given the data. Unless otherwise motivated, in general
applications, the probability distribution of $\rho_i$ is not imposed
by hand. In the Bayesian framework this exercise translates to the
computing of the joint posterior probability density of $n$
distribution-free parameters $\rho_1,\rho_2,\ldots,\rho_n$ given the
data, where the correlation between $\rho_i$ and $\rho_j$ is not
invoked, $i,j=1,2,\ldots,n;\:i\neq j$. Of course, framed this way, we
can only estimate the value of the sought function $\rho(\bx)$ at
identified values of $\bX$--unless interpolation is used--but once the
training data, thus constructed, is subsequently implemented in the
modelling of $\bxi(\cdot)$ with a ${\cal GP}$ of appropriate
dimensionality, statistical prediction at any value of $\bX$ may be
possible.

Above, we dealt schematically with the difficult case of lack of
training data.  However, even when a training data set is available,
learning $\bxi(\cdot)$ using such data can be hard. In principle,
$\bxi(\cdot)$ can be learnt using splines or wavelets. However, a
fundamental shortcoming of this method is that splines and wavelets
can fail to capture the correlation amongst the component functions of
a high-dimensional $\bxi(\cdot)$. Also, the numerical difficulty of
the very task of learning $\bxi(\cdot)$ using this technique, and
particularly of inverting the learnt $\bxi(\cdot)$, only increases with
dimensionality. Thus it is an improvement to model such a
$\bxi(\cdot)$ with a high-dimensional ${\cal GP}$. A high-dimensional
$\bxi(\cdot)$ can arise in a real-life inverse problem if the observed
data is high-dimensional, eg. the data is matrix-variate \ctp{CBB}.

Measurement uncertainties or measurement noise is almost unavoidable in practical applications and therefore, any attempt at an inference on the unknown model parameter vector $\brho$ (or the unknown model function $\rho(\bx)$) should be capable of folding in such noise in the data. In addition to this, there could be other worries stemming from inadequacies of the available measurements--the data could be ``too small" to allow for any meaningful inference on the unknown(s) or ``too big" to allow for processing within practical time frames; here the qualification of the size of the data is determined by the intended application as well as the constraints on the available computational resources. However, a general statement that is relevant here is the fact that in the Bayesian paradigm, less data is usually required than in the frequentists' approach, as motivated above. Lastly, data could also be missing; in particular, in this paper we discuss a case in which the measurable lives in a space ${\cal U}\subset{\cal W}$ where ${\cal W}$ is the state space of the system at hand.

The paper is constructed as follows. In Section~\ref{sec:intro}, we
briefly discuss the outline of state space modelling. In the following
Section~\ref{sec:generic}, our new state space modelling based
methodology is delineated; in particular, we explore alternatives to
the suggested method in subsection~\ref{sec:alternative}. The
astrophysical background to the application using which our
methodology is elucidated, is motivated in Section~\ref{sec:casestudy}
while the details of the modelling are presented in
Section~\ref{sec:modelreal}. We present details of our inference in
Section~\ref{sec:inference} and applications to synthetic and real
data are considered in Section~\ref{sec:synthetic} and
Section~\ref{sec:real} respectively. We round up the paper with some
discussions about the ramifications of our results in
Section~\ref{sec:discussions}.

\section{State Space Modelling}
\label{sec:intro}
\noindent
Understanding the evolution of the probability density function of the
state space of a dynamical system, given the available data, is of
broad interest to practitioners across disciplines. Estimation of the
parameters that affect such evolution can be performed within the
framework of state space models or SSMs
\ctp{West97,polewesthar,harveybook,Carlin92}. Basically, an SSM
comprises an observation structure and an evolution
structure. Assuming the observations to be conditionally independent,
the marginal distribution of any observation is dependent on a known
or unknown stationary model parameter, at a given value of the state
space parameter at the current time. Modelling of errors of such
observations within the SSM framework is of interest in different
disciplines \ctp{winshipecology,birdstate}.

The evolution of the state space parameter is on the other hand given
by another set of equations, in which the uncertainty of the evolved
value of the parameter is acknowledged. 
A state space representation 
of complex systems will in general have to be designed to capacitate
high-dimensional inference in which both the
evolutionary as well as observation equations are in general non-linear and
parameters and uncertainties are non-Gaussian.

In this paper we present a new methodology that offers a state space
representation in a situation when data is collected at only one time
point and the unknown state space parameter in this treatment is
replaced by the discretised version of the multivariate probability
density function ($pdf$) of the state space variable. The focus is on the
learning of the static unknown model parameter vector rather than on
prediction of the state space parameter at a time point different to
when the observations are made. In fact, the sought model parameter
vector is treated as embedded within the definition of the $pdf$ of the state
space variable. In particular, the method that we present here pertains to a
partially observed state space, i.e. the observations comprise
measurements on only some--but not all--of the components of the state
space vector. Thus in this paradigm, probability of the observations
conditional on the state space parameters reduces to the probability
that the observed state space data have been sampled from
the $pdf$ of the full state space variable vector, marginalised over the unobserved components. Here this $pdf$ includes the sought static model
parameter vector in its definition. 
In addition to addressing missing data, the presented methodology is
developed to acknowledge the measurement errors that may be
non-Gaussian.

The presented method is applied to real and synthetic astronomical
data with the aim of drawing inference on the distribution of the
gravitational mass of all matter in a real and simulated galaxy,
respectively. This gravitational mass density is projected to be
useful in estimating the distribution of dark matter in the galactic
system.

\section{Method in general}
\label{sec:generic}
\noindent
Here we aim to learn the unknown model parameter vector $\brho$ given
the data, where data comprises $N_{data}$ measurements of some
($h$) components of the $d$-dimensional state space parameter vector
$\bX$; thus, $h < d$. Here $\bX=(X_1,X_2,\ldots,X_d)^T$. 
In fact, the data set is
$\{\bu^{(i)}\}_{i=1}^{N_{data}}$ where the $i$-th observation is the
vector $\bU^{(i)}=(X_1^{(i)}, X_2^{(i)}, \ldots, X_h^{(i)})^T$. Let
the state space be ${\cal W}$ so that
$\bX\in{\cal W}$. Let the observable vector be
$\bU\in\cal{U}\subset{\cal W}$. Let
$\Pr(\bX\in[\bx,\bx+d\bx])=f_{\bX}(\bx, \balpha)d\bx$, i.e. the
probability density function of the state parameter vector $\bX$ is
$f(\bx, \balpha)$, where the distribution is parametrised by the
parameter $\balpha\in{\mathbb R}^j$.

In light of this, we suggest that $\bX \sim f(\bx,\balpha)$.
Then had the observations lived in the state
space ${\cal W}$, we could have advanced the likelihood function in terms of
$f(\cdot,\cdot)$. However, here we deal with missing data that we know
lives in the sub-space ${\cal U}$ within ${\cal
  W}$. Therefore, the data must be sampled from the density
$\nu(\bu,\balpha)$ that is obtained by marginalising the $pdf$
$f(\bx,\balpha)$ over $X_{h+1}, X_{h+2}, \ldots, X_d$. In other
words, the $pdf$ $f(\bx,\balpha)$ is projected onto the space of the
observables, i.e. onto ${\cal U}$; the result is the projected or
marginalised density $\nu(\bu,\balpha)$ of the observables. Then under
the assumption of the observed vectors being conditionally $iid$, the
likelihood function is
\begin{equation}
\Pr(\{\bu^{(i)}\}_{i=1}^{N_{data}}\vert\balpha) = \displaystyle{\prod_{i=1}^{N_{data}} \nu(\bu^{(i)},\balpha)}
\label{eqn:likeli_prelim}
\end{equation}
where\\
\begin{equation}
\nu(\bu^{(i)},\balpha)=
{\displaystyle{\int\limits_{X_{h+1}}\ldots\int\limits_{X_d} f(x_1^{(i)},\ldots,x_h^{(i)},x_{h+1},\ldots,x_d,\balpha) dx_{h+1}\ldots dx_{d}}}.
\end{equation}

While the likelihood is thus defined, what this definition still does
not include in it is the sought model parameter vector $\brho$. In
this treatment, we invoke a secondary equation that allows for the
model parameter vector $\brho$ to be embedded into the definition of
the likelihood. This can be accomplished by eliciting
application specific details but in general, we suggest
$\balpha=\bxi(\brho):=(\xi_1(\brho),\xi_2(\brho),\ldots\xi_j(\brho))^T$
and construct the general model for the state space $pdf$ to be
\begin{equation}
f(\bx,\balpha)\equiv f(\boeta(\bx),\bxi(\brho))
\label{eqn:f_fin}
\end{equation}
where $\boeta(\cdot)$ is a $t$-dimensional vector function of a vector.

Given this rephrasing of the state space $pdf$, the projected density that the $i$-th measurement $\bu^{(i)}$ is sampled from, is re-written as\\
\begin{equation}
\nu(\bu^{(i)},\brho) = 
{
\displaystyle{\int\limits_{X_{h+1}}\ldots\int\limits_{X_d} 
f(\boeta(x_1^{(i)},\ldots,x_h^{(i)},x_{h+1},\ldots,x_d),\bxi(\brho)) dx_{h+1} \ldots dx_{d}}}
\label{eqn:nu_second}
\end{equation}
so that plugging this in the RHS of Equation~\ref{eqn:likeli_prelim}, the likelihood is 
\begin{equation}
\Pr(\{\bu^{(i)}\}_{i=1}^{N_{data}}\vert\brho) = \displaystyle{\prod_{i=1}^{N_{data}} \nu(\bu^{(i)},\brho)}
\label{eqn:likeli_second}
\end{equation}

However, it is appreciated that the $pdf$ of the state space vector
$\bX$ may not be known, i.e. $f(\cdot,\cdot)$ is unknown. This
motivates us to attempt to learn the state space $pdf$ from the data,
simultaneously with $\brho$. We consider the situation that training
data is unavailable where training data would comprise a set of values
of $\bU$ generated at chosen values of $\rho(\bX)$. However, since the
very functional relationship ($\bxi(\cdot)$ in the notation motivated
above) between $\bU$ and $\rho(\bX)$ is not known, it is not possible
to generate values of $\bU$ at a chosen value of $\rho(\bX)$, unless
of course, an approximation of unquantifiable crudeness for this
functional relationship is invoked. Here we attempt to improve upon the
prospect of imposing an ${\it ad\:\:hoc}$ model of $\bxi(\cdot)$.  Then
in this paradigm, we discretise the function
$f(\boeta(\bx),\bxi(\brho))$. 

This is done by placing the relevant ranges of the vectors
$\boeta(\bx)$ and $\bxi(\brho)$ on a grid of chosen cell size. Thus,
for $\boeta(\cdot)$ and $\bxi(\cdot)$ being discretised into $t$ and
$j$-dimensional vectors respectively, the discretised version of
$f(\boeta(\bx),\bxi(\brho))$ is then represented as the $t\times
j$-dimensional vector $\bof$ such that the $p$-th component of this
vector is the value of $f(\boeta(\bx),\bxi(\brho))$ in the $p$-th
``$\boeta-\bxi$-grid cell''.  Here, such a grid-cell is the $p$-th of
the ones that the domain of $f(\cdot,\cdot)$ is discretised into,
$p=1,2,\ldots,p_{max}$. 

Given this discretisation of $f(\cdot,\cdot)$, the RHS of Equation~\ref{eqn:nu_second} is reduced to a sum of integrals over the unobserved variable in each of the grid-cells. In other words,
\begin{equation}
\nu(\bu^{(i)},\brho,\bof) = \displaystyle{
\sum_{p=1}^{p_{max}} \left[f_p
\displaystyle{\int_{\by^{(p-1)}(\bu^{(i)},\:\brho)}^{\by^{(p)}(\bu^{(i)},\:\brho)}
d\by^{'}}\right]}
\label{eqn:nu_third}
\end{equation}
where $\by^{(p)}(\bu^{(i)},\:\brho)$ is the value that the vector of
the unobserved variables takes up in the $p$-th
$\boeta-\bxi$-grid-cell. The integral on the RHS of
Equation~\ref{eqn:nu_third} represents the volume that the $p$-th
$\boeta-\bxi$-grid-cell occupies in the space of the unobserved variable vector
$\bY=(X_{h+1},X_{h+2},\ldots,X_d)^T$. The
value of $\bY$ in the $p$-th $\boeta-\bxi$-grid-cell is dependent in
general on $\brho$ for a given data vector $\bu^{(i)}$; hence the
notation $\by^{(p)}(\bu^{(i)},\:\brho)$.

In other words, to compute the integral for each $p$ (on the RHS of
Equation~\ref{eqn:nu_third}) we need to identify the bounds on the
value of each component of $\bY$ imposed by the edges of the $p$-th
$\boeta-\bxi$ grid-cell. This effectively calls for identification of
the mapping between the space of $\boeta(\bx)$ and $\bxi(\brho)$, and
the space of the unobserved variables $\bY$. Now the observation
$\bU\in{\cal U}\subset{\cal W}$. Then $\bY\in{\cal Y}$, where ${\cal
  Y}\oplus{\cal U}={\cal W}$. Indeed, this mapping will be understood
using the physics of the system at hand. We will address this
in detail in the context of the application that is considered in the
paper.

The likelihood function is then again rephrased as
\begin{eqnarray}
&&\Pr(\{\bu^{(i)}\}_{i=1}^{N_{data}}\vert\brho,\bof) = \displaystyle{\prod_{i=1}^{N_{data}} \nu(\bu^{(i)},\brho,\bof)}\nonumber \\
\label{eqn:likeli_fin}
&&=\displaystyle{\prod_{i=1}^{N_{data}}\displaystyle{
\sum_{p=1}^{p_{max}} \left[f_p
\int_{\by^{(p-1)}(\bu^{(i)},\:\brho)}^{\by^{(p)}(\bu^{(i)},\:\brho)}
d\by^{'}\right]}}
\end{eqnarray}
using Equation~\ref{eqn:nu_third}.

However, the observed data is likely to be noisy too. To incorporate
the errors of measurement, the likelihood is refined by convolving
$\nu(\bu^{(i)},\brho,\bof)$ with the density of the error
$\bvarepsilon$ in the value of the observed vector $\bU$, where the
error distribution is assumed known. Let the density of the error
distribution be $g(\bu; \bvarepsilon)$ where $\bvarepsilon$ are the
known parameters. Then the likelihood is finally advanced as
\begin{equation}
\Pr(\{\bu^{(i)}\}_{i=1}^{N_{data}}\vert\brho,\bof) = \displaystyle{\prod_{i=1}^{N_{data}} \nu(\bu^{(i)},\brho,\bof)\ast g(\bu^{(i)};\bvarepsilon)}
\label{eqn:likeli_fin2}
\end{equation}

In a Bayesian framework, inference is pursued thereafter by selecting
priors for the unknowns $\brho$ and $\bof$, and then using the
selected priors in conjunction with the likelihood defined in
Equation~\ref{eqn:likeli_fin2}, in Bayes rule to give the posterior of
the unknowns given the data, i.e $\pi(\brho,\bof\vert
\{\bu^{(i)}\}_{i=1}^{N_{data}})$. In the context of the application at
hand, we will discuss all this and in particular, advance the
data-driven choice of the details of the discretisation of the
$f(\boeta(\bx),\bxi(\brho))$ function. Posterior samples could be
generated using a suitable version of Metropolis-Hastings and
implemented to compute the 95$\%$ HPD credible regions on the learnt
parameter values.

\subsection{Alternative methods}
\label{sec:alternative}
\noindent
We ask ourselves the question about alternative treatment of the data
that could result in the estimation of the unknown model parameter
vector $\brho$. Let the sought model parameter be $s$-dimensional
while the observable $\bU$ is an $h$-dimensional vector valued variable and
there are $N_{data}$ number of measurements of this variable
available. Then the pursuit of
$\rho$ can lead us to express the data as a function of the model
parameter vector, i.e. write ${\bU}=\bXi(\brho)$, where
$\bXi(\cdot)$ is an unknown, $h$-dimensional vector valued
function of an $s$-dimensional vector. In order to learn $\brho$, we
will need to first learn $\bXi(\cdot)$ from the data, as was
motivated in the introductory section.

As we saw in that section, the learning of this high-dimensional
function from the data and its inversion are best tackled by modelling
the unknown high-dimensional function with a Gaussian
Process. \ctp{CBB} present a generic Bayesian method that
performs the learning and inversion of a high-dimensional function
given matrix-variate data within a supervised learning paradigm; the
(chosen) stationary covariance function implemented in this work is
learnt using training data and is subsequently used in the computation
of the posterior probability of the unknown model parameter vector
given the measured or test data, as well as the training data. In the
absence of available training data, such an implementation is not
possible, i.e. such a method is not viable in the unsupervised learning
paradigm. In the application we discuss below, training data is not
available and therefore, the modelling of the functional relation
between data and $\brho$, using Gaussian Processes appears to not be
possible. This shortcoming can however be addressed if simulations of
the system at hand can be undertaken to yield data at chosen values of
$\brho$; however, the very physical mechanism that connects $\brho$
with the data may be unknown (as in the considered application) and
therefore, such a simulation model is missing.  Alternatively, if
independently learnt $\brho$, learnt with an independent data set, is
available, the same can be used as training data to learn $\brho$ given
another data set. On such instances, the Gaussian Process approach is
possible but in lieu of such training data becoming available, the
learning of $\brho$ given the matrix-valued data can be performed in
the method presented above. On the other hand, a distinct advantage
of the method presented below is that it allows for the learning of
the state space density in addition to the unknown model parameter
vector.

If the suggestion is to learn the unknown system function $\rho(\bX)$
as itself a realisation of a ${\cal GP}$, the question that then needs
to be addressed is how to parametrise the covariance structure of
${\cal GP}$ in situations in which the data results from measurements
of the variable $\bU$ that shares an unknown functional relation with
$\rho(\bX)$. In other words, in such situations, the unknown system
function $\rho(\bX)$ has to be linked with the available data via a
functional relation, which however is unknown, as motivated above; we
are then back to the discussion in the previous paragraph.

\section{Case study}
\label{sec:casestudy}
\noindent
Unravelling the nature of Dark Matter and Dark Energy is one of the
major challenges of today's science. While such is pursued, the
gathering of empirical evidence for/against Dark Matter (DM) in
individual real-life observed astronomical systems is a related
interesting exercise.

The fundamental problem in the quantification of dark matter in these
systems is that direct observational evidence of DM remains elusive.
In light of this, the quantification is pursued using information
obtained from measurable physical manifestations of the gravitational
field of all matter in an astronomical system, i.e. dark as well as
self-luminous matter. Indeed, such measurements are difficult and
physical properties that manifest the gravitational effect of the
total gravitational field of the system would include the density of
X-rays emitted by the hot gas in the system at a measured temperature
\ctp{3379_xray}, velocities of individual
particles that live in the system and play in its gravitational field
\ctp{coccato09, cote_m49, romanowskyscience, m15,chakrabarty_somak} and the deviation in the path of a ray of light brought
about by the gravitational field of the system acting as a
gravitational lens \ctp{koopmans_06}. 

The extraction of the density of DM from the learnt total
gravitational mass density of all matter in the system, is performed
by subtracting from the latter, the gravitational mass density of the
self-luminous matter. The density of such luminous matter is typically
modelled astronomically using measurements of the light that is
observed from the system. A reliable functional relationship
between the total gravitational mass density and such photometric
measurements is not motivated by any physical theories though
the literature includes such a relationship as obtained
from a pattern recognition study performed with a chosen class of
galaxies \ctp{brendan}. 

In this work, we focus our attention to the learning of the total
gravitational mass density in galaxies, the images of which resemble
ellipses - as distinguished from disc-shaped galaxies for which the
sought density is more easily learnt using measurement of rotational
speed of resident particles. By a galactic ``particle'' we refer to
resolved galactic objects such as stars. There could also be
additional types of particles, such as planetary nebulae (PNe) which
are an end state of certain kinds of stars; these bear signature
marks in the emitted spectral data. Other examples of galactic particles could
include old clusters of stars, referred to as globular clusters (GCs).


\subsection{Data}
\label{sec:backgnd}
\noindent
As defined above, the space of all states that a dynamical system
achieves is referred to as the system's state space ${\cal W}$. Now, the state that a galaxy is
in, is given by the location and velocity coordinates of all particles
in the system. Here, the location coordinate is $\bX\in{\mathbb R}^3$
as is the velocity coordinate vector $\bV$. Thus, in our treatment of
the galaxy at hand, ${\cal W}$ is the space of the particle location
and velocity vector i.e. the space of the vector $\bW=(\bX^T,\bV^T)^T$. We
model the galactic particles to be playing in the average
(gravitational) force field that is given rise to by all the particles
in this system. Under the influence of this mean field, we assume the
system to have relaxed to a stationary state so that there is no time
dependence in the distribution of the vector $\bW=(\bX^T,\bV^T)^T$,
where the 3-dimensional vector $\bX=(X_1, X_2, X_3)^T$ and
$\bV=(V_1,V_2,V_3)^T$. Then the $pdf$ of the variable
$(\bX^T,\bV^T)^T$ is
$f(X_1,X_2,X_3,V_1,V_2,V_3,\balpha)$, where $\balpha$ is a
parameter vector.

Our aim is to learn the density function of gravitational mass of all
matter in the galaxy, given the data ${\bf
  D}=\{\bu_i\}_{i=1}^{N_{data}}$, where $\bU=(X_1,X_2,V_3)^T$. The
physical interpretation of these observables is that $V_3$ is the
component of the velocity of a galactic particle that is aligned along
the line-of-sight that joins the particle and the observer, i.e. we
can measure how quickly the particle is coming towards the observer or
receding away but cannot measure any of the other components of $\bV$.
Similarly, we know the components $X_1$ and $X_2$ of the location
$\bX$ of a galactic particle in the galactic image but cannot observe
how far orthogonal to the image plane the particle is, i.e. $X_3$ is
unobservable. Thus $\bU=(X_1,X_2,V_3)^T\in{\cal U}$ but
$\bW:=(X_1,X_2,X_3,V_1,V_2,V_3)^T\in{\cal W}$ with ${\cal
  U}\subset{\cal W}$.  It merits mention that in the available data,
values of $X_1$ and $X_2$ appear in the form of $\sqrt{x_1^2+x_2^2}$.
Then the data ${\bf
  D}=\{\bu_i\}_{i=1}^{N_{data}}\equiv\left\{\sqrt{(x_1^{(k)})^2+(x_2^{(k)})^2},v_3^{(k)}\right\}_{k=1}^{N_{data}}$.

Here $N_{data}$ is typically of the order of 10$^2$. While for
more distant galaxies, $N_{data}$ is lower, recent advancements is
astronomical instrumentation allows for measurement of $V_3$ of around
750 planetary nebulae or PNe (as in the galaxy CenA, Woodley,
$\&$ Chakrabarty, under preparation). Such high a sample size is
however more of an exception than the rule - in fact, in the real
application discussed below, the number of $V_3$ measurements of
globular clusters (or GCs) available is only 29.  In addition, the
measurements of $V_3$ are typically highly noisy, the data would
typically sample the sub-space ${\cal U}$ very sparsely and the data
sets are typically one-time measurements. The proposed method will
have to take this on board and incorporate the errors in the
measurement of $V_3$. Given such data, we aim to learn the
gravitational mass density of all matter - dark
as well as self-luminous - at any location $\bX$ in the galaxy. 

\section{Modelling real data}
\label{sec:modelreal}
\noindent
In the Bayesian framework, we are essentially attempting to compute
the posterior of the unknown gravitational mass density function
$\rho(\bX)$, given data ${\bf D}$.  Since
gravitational mass density is non-negative, $\rho:{\mathbb
  R}^3\longrightarrow{\mathbb R}_{\geq 0}$. That we model the mass
density to depend only on location $\bX$ is a model
assumption\footnotemark.  \footnotetext{We assume that (the system is
  Hamiltonian so that) the gravitational potential of the galaxy is
  independent of velocities and depend only on location; since
  gravitational potential is uniquely determined for a given system
  geometry, by the gravitational mass density (via Poisson Equation),
  the latter too is dependent only on $\bX$.}

From Bayes rule, the posterior probability density of $\rho(\bX)$
given data ${\bf D}$ is given as proportional to the product of the
prior and the likelihood function, i.e. the probability density of
${\bf D}$ given the model for the unknown mass density. Now, the
probability density of the data vector $\bU$ given the model
parameters $\balpha$ is given by the probability density function
$\nu(\bU,\balpha)$ of the observable $\bU$, so that, assuming the
$N_{data}$ data vectors to be conditionally independent, the
likelihood function is the product of the $pdf$s of $\bU$ obtained at the
$N_{data}$ values of $\bU$:
\begin{equation}
{\cal L}(\balpha\vert {\bf D})= \displaystyle{\prod_{k=1}^{N_{data}} \nu(x_1^{(k)},x_2^{(k)},v_3^{(k)},\balpha)}.
\end{equation}
This is Equation~\ref{eqn:likeli_fin} written in the context of this
application.  Given that $\bU\in{\cal U}\subset{\cal W}$, the $pdf$ of
$\bU$ is related to the $pdf$ $f(\bX,\bV,\balpha)$ of the
vector-valued variable $\bW\equiv(\bX^T,\bV^T)^T$ as\\
\begin{equation}
\nu(x_1,x_2,v_3,\balpha) = 
\displaystyle{\int_{X_3}\int_{V_1}\int_{V_2} f(x_1,x_2,x_3,v_1,v_2,v_3,\balpha)dx_3 dv_1 dv_2}.
\end{equation}
However, this formulation still does not include the gravitational
mass density function $\rho(\bx)$ in the definition of
$f(\bX,\bV)$, we explore the Physics of the situation to find how to
embed $\rho(\bx)$ into the definition of the $pdf$
of the state space variable $\bW$, and thereby into the
likelihood. This is achieved by examining the time evolution of this $pdf$
of the state space variable; we discuss this next.

\subsection{Evolution of $f(\bX,\bW)$ and embedding $\rho(\bX)$ in it}
\noindent
Here we invoke the secondary equation that tells of the evolution of
$f(\bX,\bV)$. In general, the $pdf$ of the state space variable is a function of
$\bX$, $\bV$ and time $T$. So the general state space $pdf$ is
expected to be written as $f(\bX,\bV,T)$, with $f:{\cal W}\times{\cal
  T}\longrightarrow{\mathbb R}_{\geq 0}$. It is interpreted as the
following: at time $t$ ($t\in{\cal T}$), the probability for
$\bX\in[\bx,\bx+d\bx]$ and $\bV\in[\bv,\bv+d\bv]$ for a galactic
particle is $f(\bx,\bv,t)d^3 \bx d^3 \bv$. However, we assume that the
particles in a galaxy do not collide since the galactic particles
inside it, (like stars), typically collide over time-scales that are
$\gtrsim$ the age of galaxies \ctp{BT}. Given this assumption of
collisionlessness, the $pdf$ of $\bW=(\bX^T,\bV^T)^T$ remains
invariant. Thus, the
evolution of $f({\bx},\bv,t)$ must is guided by the Collisionless
Boltzmann Equation (CBE):
\begin{equation}
\displaystyle{\frac{df}{dt}} = \displaystyle{\frac{\partial f}{\partial t} + \sum_{i=1}^{3} {\dot{x}}_i\frac{\partial f}{\partial x_i} + \sum_{i=1}^{3} {\dot{v}}_i\frac{\partial f}{\partial v_i}} = 0.
\label{eqn:cbe}
\end{equation}
This equation suggests that when the state space 
distribution has attained stationarity, so that
$\displaystyle{\frac{\partial f}{\partial t}}=0$, $f(\bx,\bv)$ is a
constant $\forall\:\:\bx,\:\bv$ at a given time. This is
referred to as Jeans theorem \ctp{BT}. {In fact, the
  equation more correctly suggests that as long as the system has
  reached stationarity, at any given time, $f(\bx,\bv)$ is a constant
  $\forall\:\:\bx,\:\bv$ inside a well-connected region$\subseteq{\cal W}$.}  
Given this, the state space
$pdf$ can be written as a function of quantities that do not change
with time\footnotemark. \footnotetext{To be precise, the state space
  $pdf$ should be written as a function of integrals of motion, which
  remain constant along the trajectory from one point in ${\cal W}$ to
  another, during the motion.}

\begin{theorem}
\label{th:cbe}
\noindent
Any function $I(\bx,\bv)$ is a steady-state or stationary solution of
the Collisionless Boltzmann Equation i.e. a solution to the equation
$\displaystyle{\frac{df}{dt}}$=0 if and only if
$I(\bx,\bv)$ is invariant with respect to time, for all $\bx$ and $\bv$
that lie inside a well-connected region$\subseteq{\cal W}$.
\end{theorem}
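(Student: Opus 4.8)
The plan is to establish both directions of the biconditional by working directly with the total time derivative $\mathrm{d}I/\mathrm{d}t$ along particle trajectories, i.e.\ by exploiting the fact that the CBE is precisely the statement that $I$ is conserved along the flow generated by the equations of motion $\dot{\bx} = \bv$, $\dot{\bv} = -\bnabla\Phi(\bx)$ (with $\Phi$ the gravitational potential). The first step is to make this identification explicit: writing out the chain rule for $I(\bx(t),\bv(t))$ along any such trajectory gives $\mathrm{d}I/\mathrm{d}t = \partial I/\partial t + \sum_i \dot{x}_i\,\partial I/\partial x_i + \sum_i \dot{v}_i\,\partial I/\partial v_i$, which is exactly the left-hand side of Equation~\ref{eqn:cbe}. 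So ``$I$ solves $\mathrm{d}I/\mathrm{d}t = 0$'' is synonymous with ``$I$ is constant along every orbit.''

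For the ``only if'' direction, suppose $I$ is a stationary (time-independent) solution, so $\partial I/\partial t = 0$ and the remaining spatial-plus-velocity terms vanish identically. Then along any trajectory $\mathrm{d}I/\mathrm{d}t = 0$, so $I$ takes the same value at every point of that orbit. The key structural input — and here is where the ``well-connected region'' hypothesis does its work — is that within such a region the orbits are dense enough (ergodic, or at least the region is a single connected invariant set foliated by orbits that fill it) that constancy along each orbit upgrades to constancy throughout the region; hence $I$ is invariant with respect to time there in the strong sense claimed. For the ``if'' direction, suppose $I(\bx,\bv)$ is time-invariant on the well-connected region: then $\partial I/\partial t = 0$ there, and moreover $I$ being an integral of motion means its value is carried unchanged by the flow, so the directional derivative along the flow, $\sum_i \dot{x}_i\,\partial I/\partial x_i + \sum_i \dot{v}_i\,\partial I/\partial v_i$, vanishes. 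Adding the two contributions gives $\mathrm{d}I/\mathrm{d}t = 0$, i.e.\ $I$ solves the CBE in the steady state.

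The main obstacle — and the step that requires the most care — is the passage from ``constant along each individual orbit'' to ``constant throughout a well-connected region.'' This is a genuine dynamical-systems statement, not a formal manipulation: it is false without some hypothesis (on a generic region a stationary solution of the CBE is constant only on orbits, and distinct orbits may carry distinct values), so the proof must pin down exactly what ``well-connected'' is being taken to mean and invoke the corresponding property (connectedness of the invariant set together with orbit density, or an ergodicity assumption à la the strong Jeans theorem). I would state this hypothesis precisely at the outset and cite the standard stellar-dynamics treatment \ctp{BT} for the ergodic step, keeping the remainder of the argument — the chain-rule identification and the two implications — as the short, essentially bookkeeping core.
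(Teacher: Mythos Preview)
Your proposal is considerably more ambitious than what the paper actually proves. You have read ``$I(\bx,\bv)$ is invariant with respect to time'' as ``$I$ takes a single constant value throughout the well-connected region,'' and accordingly you invoke an ergodicity / density-of-orbits argument (essentially the \emph{strong} Jeans theorem) to upgrade constancy along each orbit to constancy on the whole region. The paper means something much weaker: ``invariant with respect to time'' is simply ``$I$ does not change as $t$ advances along a trajectory,'' i.e.\ $I$ is an integral of motion. Under that reading the theorem is essentially definitional, and the paper's proof is correspondingly two lines: if $I(\bx,\bv)$ is constant in $t$ then $dI/dt=0$, so it solves the CBE; conversely, if $J(\bx,\bv,t)$ solves $dJ/dt=0$ then $J$ is constant in $t$, hence has no explicit $t$-dependence and can be written as some $I(\bx,\bv)$. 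The ``well-connected region'' qualifier appears in the statement but is not used in the paper's argument at all.

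Your chain-rule identification of $df/dt$ with the derivative along the flow is correct, and your ``if'' direction is close in spirit to what the paper does. But the ergodic step you single out as ``the main obstacle'' is an obstacle only under your stronger reading of the conclusion; for the claim as the paper intends it, no such dynamical input is required. If you wish to prove the strong Jeans theorem your outline is the right shape, but that is a different and genuinely harder statement than Theorem~\ref{th:cbe}.
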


\begin{proof}
The proof is simple; for the proof we assume $\bX$ and $\bV$ to take respective values of $\bx$ and $\bv$ inside a well-connected sub-space of ${\cal W}$. 
Let a function of the vectors $\bx$, $\bv$ be $I(\bx,\bv)$ such that it 
remains a constant w.r.t. time. Then
$\displaystyle{\frac{dI(\bx,\bv)}{dt}}=0\Longrightarrow$this
function is a solution to the equation
$\displaystyle{\frac{df}{dt}}$=0.
 
Let the equation $\displaystyle{\frac{df}{dt}}$=0 have a solution
$J(\bx,\bv,t)$. This implies $\displaystyle{\frac{dJ(\bx,\bv,t)}{dt}}=0$,
i.e. $J(\bx,\bv,t)$ is a constant with respect to time. For this to be true, $J(\bx,\bv,t)\equiv I(\bx,\bv)$. Therefore the solution to
$\displaystyle{\frac{df}{dt}}$=0 is a function of $\bx$ and $\bv$ that
is a constant w.r.t. time.
\end{proof}
In fact, any function of a time-invariant function of vectors $\bX$
and $\bV$ is also a solution to the CBE.

Now, in our work we assume the system to have attained stationarity so
that the $pdf$ of the state space variable has no time
dependence. Then the above theorem suggests that we can write
$f(\bx,\bv)=g(I_1(\bx,\bv), I_2(\bx,\bv), \ldots, I_n(\bx,\bv))$ for
any $n\in{\mathbb Z}_{+}$, where $I_i(\cdot,\cdot)$ is any
time-independent function of 2 vectors, for $i=1,2,\ldots,n$.

Now, upon eliciting from the literature in galactic dynamics \ctp{contop63, binney82} we realise the
following.
\begin{itemize}
\item The number $n$ of constants of motion can be at most 5, i.e. $n=1,2,3,4,5$.
\item The $pdf $ of the state space variable has to include particle energy $E(\bX,\bV)$, (which is one constant of motion), in its domain. Thus, we can write $f(\bX,\bV)=f(E(\bX,\bV), I_2(\bX,\bV), \ldots, I_5(\bX,\bV))$.
\item Energy $E(\bX,\bV)$ is given as the sum of potential energy
  $\bPhi(\parallel\bX\parallel)$ and kinetic energy $\bV\cdot\bV/2$, i.e.
\begin{eqnarray}
&&E(\bX,\bV)=\bPhi(\parallel\bX\parallel)+\bV\cdot\bV/2,\\ 
&&\parallel\bX\parallel\equiv\sqrt{\bX\cdot\bX}\equiv\sqrt{X_1^2+X_2^2+X_3^2},\\
&&\mbox{with}\quad\bX=(X_1,X_2,X_3)^T,\nonumber\\ 
&&\bV\cdot\bV\equiv\parallel\bV\parallel^2 \equiv V_1^2+V_2^2+V_3^2,\\
&&\mbox{with}\quad\bV=(V_1,V_2,V_3)^T\nonumber.
\end{eqnarray}
Here $\parallel\cdot\parallel$ is the Euclidean norm. That the potential is maintained as dependent only of the location vector $\bX$ and not on $\bV$ stems from our assumption that there is no dissipation of energy in this system, i.e. we model the galaxy at hand to be a Hamiltonian system. Here, a basic equation of Physics relates the potential of the galaxy to the gravitational mass density of the system, namely Poisson Equation:
\begin{eqnarray}
\bnabla^2\Phi(R) = -4\pi G\rho(R)\quad\mbox{where}&&\\
\label{eqn:poisson}
R := \parallel\bX\parallel= \sqrt{X_1^2+X_2^2+X_3^2}&&,
\end{eqnarray}
$\bnabla^2$ is the Laplace operator (in the considered geometry of the galaxy) and $G$ is a known constant (the Universal gravitational constant).  
\end{itemize}
On the basis of the above, we can write
\begin{eqnarray}
&&f(\bX,\bV)\nonumber \\
&=& f(E(\bX,\bV), I_2(\bX,\bV), \ldots, I_5(\bX,\bV))\\ \nonumber
&=& f(E(\Phi(\rho(R)),\bV\cdot\bV), I_2(\bX,\bV), \ldots, I_5(\bX,\bV))\\ \nonumber
&=& f(E(\bX\cdot\bX,\bV\cdot\bV), I_2(\bX,\bV), \ldots, I_5(\bX,\bV))\\ 
\label{eqn:fmanyforms}
\end{eqnarray}

At this point we recall the form of an isotropic function of 2 vectors \ctp{shi, truesdell, wang}.
\begin{remark}
\label{remark:isotropic}
\noindent
A scalar function $h(\cdot,\cdot)$ of two vectors $\ba\in{\mathbb
  R}^m$ and $\bb\in{\mathbb R}^m$ is defined as isotropic with respect
to any orthogonal transformation $\bQ^{(m\times m})$ if $h(\ba, \bb) =
h(\bQ\ba, \bQ\bb)$. Here $\bQ^T\bQ=\bI$, the identity matrix and ${\textrm{det}}\bQ=\pm 1$.
Under any such orthogonal transformation $\bQ$, only the
magnitudes of the vectors $\ba$ and $\bb$, and the angle between them
remain invariant, where the angle between $\ba$ and $\bb$ is
$\displaystyle{\frac{\ba\cdot\bb}{\sqrt{\ba\cdot\ba}\sqrt{\bb\cdot\bb}}}$. Therefore,
it follows that \\
$$h(\cdot,\cdot)\quad\mbox{is isotropic}\quad \iff$$
$$h(\ba, \bb)=h(\bQ\ba, \bQ\bb)=h(\ba\cdot\ba, \bb\cdot\bb, \ba\cdot\bb)$$.
\end{remark}

We also recall that in this application, $\bX\cdot\bV=0$ by construction.

This leads us to identify any $pdf$ of the state space variable $\bW=(\bX^T,\bV^T)^T$ as isotropic if the $pdf$ is expressed as a function of energy $E(\bX,\bV)$ alone. This follows from Equation~\ref{eqn:fmanyforms} since 
$f(\bX,\bV) = f(E)\Longrightarrow$
\begin{eqnarray}
f(\bX,\bV) &=& f(\Phi(\rho(R)),\bV\cdot\bV)\\ \nonumber
           &=& f(\bX\cdot\bX,\bV\cdot\bV,\bX\cdot\bV) \\ \nonumber
           && \quad\mbox(since \bX\cdot\bV=0)
\end{eqnarray}
which is compatible with the form of isotropic functions as per Remark~\ref{remark:isotropic}. Thus, if the $pdf$ of the state space variable is dependent on only 1 constant of motion--which by the literature in galactic dynamics has to be energy $E(\bX,\bV)$--then $f(\bX,\bV)$ is an isotropic function of $\bX$ and $\bV$.

However, there is no prior reason to model a real galaxy as having an
isotropic probability distribution of its state space. Instead, we
attempt to 
\begin{itemize}
\item use as general a model for the state space distribution of
the system as possible, 
\item while ensuring that the degrees of freedom in
the model are kept to a minimum to ease computational ease. 
\end{itemize}
This leads us to include another time-invariant function $L(\bX,\bV)$
in the definition of the $pdf$ of the state space variable in addition
to $E(\bX,\bV)$, such that the dependence on $\bX$ and $\bV$ in
$L(\cdot,\cdot)$ is not of the form that renders $f(E,L)$ compatible
with the definition of isotropic function, as per
Remark~\ref{remark:isotropic}, unlike $f(E)$.

This is so because
\begin{equation}
L(\bX,\bV) := \parallel \bX \btimes \bV \parallel
\label{eqn:den_l}
\end{equation}
where $\btimes$ represents the ``cross-product'' of the two 3-dimensional vectors $\bX$ and $\bV$, i.e.\\
\begin{equation}
(\bX\times\bV)^T:=
{
\left(\left| \begin{array}{cc}
X_2 & X_3 \\
V_2 & V_3 \end{array} \right| - \left| \begin{array}{cc}
X_1 & X_3 \\
V_1 & V_3 \end{array} \right|  + \left| \begin{array}{cc}
X_1 & X_2 \\
V_1 & V_2 \end{array} \right|\right)}
\end{equation}
so that \\
\begin{equation}
L(\bX,\bV) = 
\parallel(X_2 V_3-X_3 V_2, X_3 V_1-X_1 V_3, X_1 V_2-X_2 V_1)^T\parallel
\label{eqn:ldefn}
\end{equation}
Then, we set
$f(E,L) \equiv f(\bX\cdot\bX, \bV\cdot\bV, \bX \btimes \bV)$ which is not compatible with the form of an isotropic function of the 2 vectors $\bX$ and $\bV$. In other words, if the support of the $pdf$ of $\bX$ and $\bV$ includes $E(\bX,\bV)$ and $L(\bX,\bV)$, then the state space distribution is no longer restricted to be isotropic. 

Such a general state space is indeed what we aimed to achieve with our
model. At the same time, adhering to no more than 1 constant of motion
in addition to energy $E(\bX,\bV)$ helps to keep the dimensionality of
the domain of the $pdf$ of the state space function to the minimum that
it can be, given our demand that no stringent model-driven constraint
be placed on the state space geometry. Thus, we use $n$=2 in our model.

So now we are ready to express the unknown gravitational mass density
function as embedded within the $pdf$ of $\bX$ and $\bV$ as:\\
$f(\bX,\bV)=$
\begin{equation}
f(E(\Phi(\rho(R))+\bV\cdot\bV/2),L(\bX,\bV))\equiv 
\small{f(\Phi(\rho(\sqrt{X_1^2+X_2^2+X_3^2})), V_1^2+V_2^2+V_3^2, \parallel\bX\btimes\bV\parallel)}
\label{eqn:inter}
\end{equation}
using Equation~\ref{eqn:den_l}. To cast this in the form of Equation~\ref{eqn:f_fin}, we realise that the unknown gravitational mass density function will need to be discretised; we would first discretise the range of values of $R$ over which the
gravitational mass density function $\rho(R)$ is sought. Let $R=r$ such that
$r\in[r_{min}, r_{max}]$
and let the width of each $R$-bin be
$\delta_r$. Then $\rho(r)$ is discretised as the unknown model parameter vector 
\begin{equation}
\brho := (\rho_1,\rho_2,\ldots,\rho_{N_x})^T,
\end{equation}  
where
\begin{equation}
\rho_b:=\rho(r) \quad\mbox{for}\quad r\in[(b-1)\delta_r, b\delta_r]\quad b=1,2,\ldots,N_x
\label{eqn:rho_discrete}
\end{equation}
where $N_x:=\displaystyle{{\textrm{int}}\left(\frac{r_{max}-r_{min}}{\delta_r}\right)+}$. 

Then following on from Equation~\ref{eqn:inter} we write
\begin{equation}
f(\bX,\bV) = f(\brho, V_1^2+V_2^2+V_3^2, \parallel\bX\btimes\bV\parallel)
\end{equation}
This is in line with Equation~\ref{eqn:f_fin} if we identify the function of the unknown model parameter vector $\bxi(\rho)$ in the RHS of Equation~\ref{eqn:f_fin} with the unknown gravitational mass density vector $\brho$. Then the $pdf$ of the state space variables $\bX$ and $\bV$ depends of $\brho$ and $\bX$ and $\bV$. Then the equivalent of Equation~\ref{eqn:nu_second} is
\begin{equation}
\nu(x_1^{(k)},x_2^{(k)},v_3^{(k)},\brho)={\displaystyle{\int\limits_{X_3}\int\limits_{V_{2}}\int\limits_{V_1} 
f\left(\brho, (v_1^2)^{(k)}+(v_2^2)^{(k)}+v_3^2, \parallel(x_1^{(k)},x_2^{(k)},x_3)^T\btimes(v_1,v_2,v_3^{(k)})^T\parallel\right) dx_{3} dv_{2} dv_{1}}},
\label{eqn:nu_galaxy}
\end{equation}

$k=1,2,\ldots,N_{data}$.
Then plugging this in the RHS of Equation~\ref{eqn:likeli_prelim}, the likelihood is 
\begin{equation}
\Pr(\{\bu^{(k)}\}_{k=1}^{N_{data}}\vert\brho) = \displaystyle{\prod_{k=1}^{N_{data}} \nu(\bu^{(k)},\brho)}
\label{eqn:likeli_galaxy}
\end{equation}
Then to compute the likelihood and thereafter the posterior probability of $\brho$ given data ${\bf D}$, we will need to compute the integral in Equation~\ref{eqn:nu_galaxy}. According to the general methodology discussed above in Section~\ref{sec:generic}, this is performed by discretising the domain of the $pdf$ of the state space variable, i.e. of $f(E,L)$. In order to achieve this discretisation we will need to invoke the functional relationship between $E(\bX,\bV)$ and $L(\bX,\bV)$. Next we discuss this.


\subsection{Relationship between $E(\bX,\bV)$ and $L(\bX,\bV)$}
\label{sec:relation}
\noindent
We recall the physical interpretation of $L(\bX,\bV)$ as the norm of the ``angular momentum'' vector, i.e. $\displaystyle{\frac{L^2(\bX,\bV)}{R^2}}$ is the square of the speed $V_c(\bX,\bV)$ of circular motion of a particle with location $\bX$ and velocity $\bV$; here, ``circular motion'' is motion orthogonal to the location vector $\bX$, distinguished from non-circular motion that is parallel to $\bX$ and the speed of which is $V_{nc}(\bX,\bV)$. Then as these two components of motion are mutually orthogonal, square of the particle's speed is 
\begin{equation}
V^2\equiv \bV\cdot\bV = V_c^2 + V_{nc}^2 = \displaystyle{\frac{L^2(\bX,\bV)}{R^2} + V_{nc}^2},
\end{equation}
where $V_{nc}$ is the magnitude of the component of $\bV$ that is parallel to $\bX$, i.e. 
\begin{equation}
V_{nc} = \displaystyle{\frac{\bV \cdot \bX}{\parallel \bX \parallel}}
\end{equation}
But we recall that energy $E(\bX,\bV)=\Phi(\rho(R)) + \bV\cdot\bV/2$. 

This implies that
\hspace*{.6cm}
\begin{eqnarray}
E(\bX,\bV)=&&\Phi(\rho(R)) + \displaystyle{\frac{V^2}{2}}\nonumber \\ 
=&&\Phi(\rho(R)) + \displaystyle{\frac{L^2(\bX,\bV)}{2R^2} + \frac{V_{nc}^2}{2}} \nonumber \\
=&&\Phi(\rho(R)) + \displaystyle{\frac{V_{nc}^2}{2}} +
{\displaystyle{\small{\frac{\parallel(X_2 V_3-X_3 V_2, X_3 V_1-X_1 V_3, X_1 V_2-X_2 V_1)^T\parallel^2}{2R^2}}}}\nonumber \\
&& 
\label{eqn:ljhamela}
\end{eqnarray}
where in the last equation, we invoked the definition of $L(\bX,\bV)$ sing Equation~\ref{eqn:ldefn}.

At this stage, to simplify things, we consciously choose to work in the coordinate system in which the vector $\bX$ is rotated to vector $\bS=(S_1,S_2,S_3)^T$
by a rotation through angle $\theta:=\displaystyle{\cos^{-1}\frac{X_2}{\sqrt{X_1^2+X_2^2}}}$, i.e.
\begin{equation}
 \left(\begin{array}{c}
S_1 \\
S_2 \\
S_3 \end{array} \right)
=\left(\begin{array}{ccc}
\cos\theta & -\sin\theta & 0 \\
\sin\theta & \cos\theta & 0 \\
0 & 0 & 1 \end{array} \right)
 \left(\begin{array}{c}
X_1 \\
X_2 \\
X_3 \end{array} \right)
\end{equation}
Then by definition, $S_1$=0, i.e. the projection of the
$(S_1,S_2,S_3)^T$ vector on the $S_3$=0 plane lies entirely along the
$S_2$-axis.

This rotation does not affect the previous
discussion since 
\begin{itemize}
\item the previous discussion invokes the location variable
either via $R=\sqrt{X_1^2+X_2^2+X_3^2}=\sqrt{S_1^2+S_2^2+S_3^2}$, 
\item or via $\sqrt{x_1^2 + x_2^2}=\sqrt{s_1^2+s_2^2}$ as within the data
structure: ${\bf D}=\left\{\sqrt{(x_1^{(k)})^2+ (x_2^{(k)})^2},
v_3^{(k)}\right\}_{k=1}^{N_{data}}=$ \\
$\left\{\sqrt{(s_1^{(k)})^2+ (s_2^{(k)})^2},v_3^{(k)}\right\}_{k=1}^{N_{data}}\equiv\left\{s_1^{(k)}, s_2^{(k)}, v_3^{(k)}\right\}_{k=1}^{N_{data}}$.
\end{itemize} 
Having undertaken the rotation, we refer to $E(\bX,\bV)$ and
$L(\bX,\bV)$ as $E(\bS,\bV)$ and $L(\bS,\bV)$ respectively.

This rotation renders the cross-product in the definition of $L(\cdot,\cdot)$ simpler; under this choice of the
coordinate system, as $S_1=0$ 
\begin{eqnarray}
[L(\bS,\bV)]^2 &=& {\parallel \bS\btimes \bV\parallel^2}\nonumber\\
&=&{\parallel(S_2 V_3 - S_3 V_2, S_3 V_1, -S_2 V_1)^T\parallel^2}\nonumber\\ 
&=&{\parallel(R V_3 \sin\gamma - R V_2\cos\gamma, R V_1\cos\gamma, -R V_1\sin\gamma)^T\parallel^2} \nonumber\\
&=&{R^2\left[V_1^2 + (V_2\cos\gamma - V_3\sin\gamma)^2\right]}
\label{eqn:lcoord}
\end{eqnarray}
where
\begin{eqnarray}
\displaystyle{\frac{S_3}{R}=\frac{S_3}{\sqrt{S_1^2+S_2^2+S_3^2}}}:=\cos\gamma
\end{eqnarray} 
so that $\displaystyle{\frac{S_2}{\sqrt{S_1^2+S_2^2+S_3^2}}=\sin\gamma=\frac{S_2}{R}}$, 
so that in this rotated coordinate system, from Equation~\ref{eqn:ljhamela}
\begin{eqnarray}
E(\bS,\bV) &=& \Phi(\rho(R)) \nonumber \\ 
{} &&+ \displaystyle{\frac{\left[V_1^2 + (V_2\cos\gamma - V_3\sin\gamma)^2\right]}{2}} \nonumber \\
{} &&+ \displaystyle{\frac{V_{nc}^2}{2}}.
\label{eqn:elrevamp}
\end{eqnarray}
Also, the component of $\bV$ along the location vector $\bS$ is $V_{nc}=\bV\cdot\bS/R=(V_2 S_2+V_3 S_3)/R$.

From Equation~\ref{eqn:ljhamela} it is evident that for a given value $\epsilon$ of $E(\bS,\bV)$, the highest value $\ell_{max}(\epsilon)$ of $L(\bS,\bV)$ is attained if $V_{nc}=0$ (all motion is circular motion). This is realised only when the radius $R_c$ of the circular path of the particle takes a value $r_c$ such that
\begin{equation}
\displaystyle{{\ell_{max}(\epsilon)^2}} = \displaystyle{2r_c^2
  \left[\epsilon-\Phi(r_c)\right]}
\label{eqn:ellmax}
\end{equation}
The way to compute $r_c$ given $\epsilon$ is defined in the literature
\ctp{BT} as the positive definite solution for $r$ in the equation
\begin{equation}
\displaystyle{2r^2 \left[\epsilon-\Phi(r)\right]} = \displaystyle{-r^3\frac{d\Phi(r)}{dr}}
\label{eqn:ellsoln}
\end{equation}

We are now ready to discretise the domain of the $pdf$ of the state space variable, i.e. of $f(E,L)$ in line with the general methodology discussed above in Section~\ref{sec:generic} with the aim of computing the integral in Equation~\ref{eqn:nu_galaxy}.

\subsection{Discretisation of $f(E,L)$}
\label{sec:discretef}
\noindent
We discretise the domain of $fE,L)$ where this 2-dimensional domain is defined by the range of values $E=\epsilon\in[\epsilon_{min},\epsilon_{max}]$ and $L=\ell\in[\ell_{min},\ell_{max}]$, by placing a uniform 2-dimensional rectangular grid over $[\epsilon_{min},\epsilon_{max}]\times[\ell_{min},\ell_{max}]$ such that the range $[\epsilon_{min},\epsilon_{max}]$ is broken into $E$-bins each $\delta_\epsilon$ wide and the range $[\ell_{min},\ell_{max}]$ is broken into $L$-bins each $\delta_\ell$ wide. Then each 2-dimensional $E-L$-grid cell has size $\delta_\epsilon\times\delta_\ell$. Then,
\begin{eqnarray}
f_{c,d} &:=& f(\epsilon,\ell) \quad\mbox{{for}}\nonumber \\ 
\epsilon&\in&[\epsilon_{min}+(c-1)\delta_\epsilon, \epsilon_{min}+c\delta_\epsilon],\nonumber \\
\ell&\in&[\ell_{min}+(d-1)\delta_\ell, \ell_{min}+d\delta_\ell],\nonumber \\
c&=&1,2,\ldots,N_\epsilon,\nonumber \\
d&=&1,2,\ldots,N_\ell,
\label{eqn:prelim_cd}
\end{eqnarray}
where the number of $E$-bins is $N_\epsilon:=\displaystyle{{\textrm{int}}\left(\frac{\epsilon_{max}-\epsilon_{min}}{\delta_\epsilon}\right)+1}$ and the number of $L$-bins is  $N_\ell:=\displaystyle{{\textrm{int}}\left(\frac{\ell_{max}-\ell_{min}}{\delta_\ell}\right)+1}$. We then define the $N_\epsilon\times N_\ell$-dimensional matrix 
\begin{equation}
{\bf F} := [f_{c,d}]_{N_\epsilon\times N_\ell}.
\label{eqn:matis_pdf}
\end{equation}
In our model this is the discretised version of the $pdf$ $f(E,L)$ of
the state space variable $\bW=(\bS^T,\bV^T)^T$. 

In this application, a particle with a positive value of energy is so
energetic that it escapes from the galaxy. We are however
concerned with particles that live inside the galaxy, i.e. are bound to the galaxy and therefore, the maximum
energy that a galactic particle can attain is 0, i.e. $\epsilon_{max}=0$. Given the definition of energy $E(\bS,\bV)=\Phi(R) + \bV\cdot\bV/2$ we realise that the value of $E(\bS,\bV)$ is minimum,
i.e. as negative as it can be, if $\bV\cdot\bV$=0, (i.e. velocity is
zero) and $\Phi(R)$ is minimum, which occurs at $R=0$. In
other words, the minimum value of $E$ is $\Phi(0)$ which is
negative. In our work we normalise the value $\epsilon$ of $E$ by
$-\Phi(0)$, so that $\epsilon\in[-1,0]$. In other words, the
aforementioned $\epsilon_{min}=-1$ and $\epsilon_{max}=0$.

We normalise the value $\ell$ of $L(\bS,\bV)$ with the maximal value
$\ell_{max}(\epsilon)$ that $\ell$ can attain for a given value
$\epsilon$ of $E$ (Equation~\ref{eqn:ellmax}). The maximum value that can be attained by $L$ is for $\epsilon=0$; having computed $r_c$ from
Equation~\ref{eqn:ellsoln}, $\ell_{max}(0)$ is computed. Then, as
normalised by $\ell_{max}(0)$, the maximal value of $L$ is 1. Also the
lowest value of $L$ is 0, i.e. $\ell_{min}$=0.  In light of this, we
rewrite Equation~\ref{eqn:prelim_cd} as
\begin{eqnarray}
f_{c,d} &:=& f(\epsilon,\ell) \quad\mbox{{for}}\nonumber \\ 
\epsilon&\in&[-1+(c-1)\delta_\epsilon, -1+c\delta_\epsilon],\nonumber \\
\ell&\in&[(d-1)\delta_\ell, d\delta_\ell],\nonumber \\
c&=& 1,2,\ldots,N_\epsilon,\nonumber \\
d&=& 1,2,\ldots,N_\ell.
\label{eqn:prelim_cd2}
\end{eqnarray}
The $E$-binning and $L$-binning are kept uniform in the application we
discuss below, i.e. $\delta_\epsilon$ and $\delta_\ell$ are constants.

\subsubsection{Data-driven binning}
\noindent
There are $N_\ell$ $L$-bins and $N_\epsilon$ $E$-bins.  Above we saw
that as the range covered by normalised values of $E$ is $[-1,0]$, the
relationship between $N_\epsilon$ and $E$-bin width $\delta_\epsilon$
is $\delta_\epsilon=1/N_\epsilon$. We make inference on $N_\ell$
within our inference scheme while the Physics of the situation drives
us to a value of $N_\epsilon$. It could have been possible to also
learn $N_\epsilon$ from the data within our inference scheme but that
would have been tantamount to wastage of information that is available
from the domain of application.

We attempt to learn $N_\ell$ from the data within our inference
scheme; for a given $N_\ell$, $N_\epsilon$ is fixed by the data at
hand. To understand this, we recall the aforementioned relation
$\epsilon=\Phi(r) + v_{nc}^2/2 + \ell^2/2r^2$. Let in the available
data set,
\begin{enumerate}
\item[--]the minimum value of $\sqrt{S_1^2+S_2^2}$ be $r_{min}$,
\item[--]the maximum value of $\sqrt{S_1^2+S_2^2}$ be $r_{max}$ so that the value of $\Phi(\cdot)$ is no less than $\Phi(r_{max})$, 
\item[--]the maximum value of $V_3$ be $v_3^{(max)}$ so that the unnormalised value of $E$ is no less than 
\begin{equation}
\epsilon_{max}:=\displaystyle{\Phi(r_{max}) + \frac{[v_3^{(max)}]^2}{2} + \frac{[N_\ell \ell_{max}(0)]^2}{2 r_{min}^2}}
\end{equation}
\item[--]and the unnormalised $\epsilon$ is no more than $\Phi(0)$.
\end{enumerate}
Thus, it is clear that the $E$-binning should cover the interval
beginning at a normalised value of -1 and should at least extend to
$\epsilon_{max}/[-\Phi(0)]$. 

Then we set $E$-bin width $\delta_\epsilon=1/N_\epsilon$
and learn number of $L$-bins, $N_\ell$, from the data within
our inference scheme. Then at any iteration, for the current value of
$N_\ell$ and the current $\brho$ (which leads to the current value of
$\Phi(r)$ according to Equation~\ref{eqn:poisson}), placing
$\epsilon_{max}/[-\Phi(0)]$ at the centre of the $N_\epsilon$-th
$E$-bin gives us
\begin{equation}
\displaystyle{\frac{\epsilon_{max}}{-\Phi(0)}} = -1 + (N_\epsilon - 0.5)\delta_\epsilon
\end{equation}
i.e. $N_\epsilon = {\textrm{int}}\left(\Phi(0)/[2\epsilon_{max}]\right)$.

Experiments suggest that for typical galactic data sets, $N_\ell$
between 5 and 10 implies convergence in the learnt vectorised form of
the gravitational mass density $\brho$. This leads us to choose a
discrete uniform prior over the set $\{5,6,\ldots,10\}$, for $N_\ell$:
\begin{equation}
\pi_0(N_\ell)=\displaystyle{\frac{1}{5}}.
\label{eqn:priornell}
\end{equation}

Again, the minimum and maximum values of $\sqrt{S_1^2+S_2^2}$ in the
data fix $r_{min}$ and $r_{max}$ respectively, so that $r_{max} =
r_{min} + \delta_r(N_x -1)$.  The radial bin width $\delta_r$ is
entirely dictated by the data distribution such that there is at least
1 data vector in each radial bin. Thus, $N_x$ and $\delta_r$ are not
parameters to be learnt within the inference scheme but are directly
determined by the data.

\subsection{Likelihood}
\label{sec:appl_likeli}
\noindent
Following Equation~\ref{eqn:likeli_fin}, we express the likelihood in this application in terms of the $pdf$ of $\bS$ and $\bV$, marginalised over all those variables that we do not have any observed information on. Then for the data vector $(s_1^{(k)}, s_2^{(k)}, v_3^{(k)})^T$, the marginal $pdf$ is\\ 
{{
\begin{equation}
\nu(s_1^{(k)}, s_2^{(k)}, v_3^{(k)}) =
\displaystyle{\small{\int\limits_{S_3}\int\limits_{V_1}\int\limits_{V_2} f\left(g_1(r^{(k)},v_3^{(k)},v_1,v_2),g_2(r^{(k)},\gamma^{(k)},v_3^{(k)},v_1,v_2)\right) ds_3 dv_1 dv_2}},\nonumber 
\end{equation}}}
where\\ 
$g_1(r^{(k)},v_3^{(k)},v_1,v_2):=\small{\Phi(r^{(k)})+ \frac{[v_1^2+v_2^2+(v_3^{(k)})^2]}{2}}$\\
$g_2(r^{(k)},\gamma^{(k)},v_3^{(k)},v_1,v_2):=$\\
$\small{[(r^{(k)})^2(v_1^2 + (v_2\cos\gamma^{(k)}-v_3^{(k)}\sin\gamma^{(k)}))^2]}$,\\
with $[L(\bS,\bV)]^2$ recalled from Equation~\ref{eqn:lcoord}, and we have used
\begin{equation}
r^{(k)}:=
\sqrt{(s_1^{(k)})^2+(s_2^{(k)})^2+s_3^2}
\label{eqn:rk}
\end{equation}
and
$\cos\gamma^{(k)}:=\displaystyle{\frac{s_3}{r^{(k)}}}$.

Then given that the range of values of $E$ and $L$ is discretised, we write
\begin{equation}
\nu(s_1^{(k)}, s_2^{(k)}, v_3^{(k)}) = 
\displaystyle{\sum_{c=1}^{N_\epsilon}\sum_{d=1}^{N_\ell}\left[f_{c,d}\int_{\{s_3^{(c,d)}\}\vert{\brho},\{v_1^{(c,d)}\}\vert{\brho},\{v_2^{(c,d)}\}\vert{\brho}} ds_3 dv_1 dv_2\right]},
\label{eqn:likeli_integ}
\end{equation}
where $\{s_3^{(c,d)}\}\vert\brho$ refer to the values taken by $S_3$ for a given $\brho$, inside the
$cd$-th $E-L$-grid-cell.
Similarly, $\{v_1^{(c,d)}\}\vert\brho$ and $\{v_2^{(c,d)}\}\vert\brho$
refer to the values of $V_1$ and $V_2$ inside the $cd$-th
$E-L$-grid-cell respectively, given $\brho$. 

Indexing the values of any of the unobserved variables in this
grid-cell as conditional on $\brho$, is explained as
follows. $\{s_3^{(c,d)}\}$, $\{v_1^{(c,d)}\}$ and $\{v_2^{(c,d)}\}$
are determined by the mapping between the space of $E$ and $L$ and the
space of the unobservables, namely $S_3,V_1,V_2$. This
mapping involves the definition of $E$ and $L$ in terms of the state
space coordinates $(\bS^T,\bV^T)^T$, which in turn depends upon the
function $\rho(r)$ or its discretised version, $\brho$. Hence the
values taken by any of the 3 unobservables in the $cd$-th
$E-L$-grid-cell depend on $\brho$.  Here
$c=1,2,\ldots,N_\epsilon$ and $d=1,2,\ldots,N_\ell$.

We realise that the integral on the RHS of
Equation~\ref{eqn:likeli_integ} represents the volume occupied by
the $E-L$-grid-cell inside the space of the unobserved variables. The
computation of this volume is now discussed.

\subsection{Volume of any $E-L$-grid-cell in terms of the unobservables}
\label{sec:mapping}
\noindent
We begin by considering the volume of any $E-L$-grid-cell in the space of the 2 observables, $V_1$ and $V_2$, at a given value of $S_3$. Thereafter, we will consider the values of the 3rd unobservable, $S_3$, in this grid-cell.

The definition $E(\bs^{(k)},\bv^{(k)})=\Phi(r^{(k)}) +
\bv^{(k)}\cdot\bv^{(k)}/2$ (Equation~ref{eqn:ljhamela}) implies that
for the $k$-th data vector $(s_1^{(k)}, s_2^{(k)}, v_3^{(k)})^T$, all
particles with $S_3=s_3$ and energy
$E(\bs^{(k)},\bv^{(k)})=\epsilon_c$ will obey the equation
\begin{equation}
v_1^2 + v_2^2 = 2\left[\epsilon_c - \Phi(r^{(k)})\right] - (v_3^{(k)})^2,
\label{eqn:engcirc}
\end{equation}
i.e. for $S_3=s_3$, all particles lying in the $c$-th E-bin will lie in the space of $V_1$ and $V_2$, within a circular annulus that is centred at (0,0) and has radii lying in the interval $[\varepsilon_{c+1}, \varepsilon_c]$ where
\begin{eqnarray}
\varepsilon_{c+1} &:=& \displaystyle{\sqrt{\left\{2\epsilon_{c+1} - 
2\Phi(r^{(k)}) - (v_3^{(k)})^2\right\}}}\nonumber \\ 
\varepsilon_{c} &:=& \displaystyle{\sqrt{\left\{2\epsilon_{c} - 
2\Phi(r^{(k)}) - (v_3^{(k)})^2\right\}}}. 
\end{eqnarray}


For $S_3=s_3$, the definition $L(\bs^{(k)},\bv^{(k)})=\parallel
\bs^{(k)}\btimes \bv^{(k)}\parallel$ provides a representation for all
particles in the $d$-th $L$-bin with given observed values of $S_1$,
$S_2$ and $V_3$. 


It then follows from $[L(\bs^{(k)},\bv^{(k)})]^2=
(r^{(k)})^2\left[(v_1^2 + \{v_2\cos\gamma -v^{(k)}_3\sin\gamma\}^2\right]$, 
(Equation~\ref{eqn:lcoord})
that for the $k$-th data vector, all particles with $S_3=s_3$,
and in the $d$-th $L$-bin ($L(\bs^{(k)},\bv^{(k)})=\ell_d$) will
obey the equation
\begin{equation}
\displaystyle{\frac{\ell_d^2}{(s_1^{(k)})^2+(s_2^{(k)})^2+s_3^2}} = \left[v_1^2 + \cos^2\gamma^{(k)}(v_2 - v^{(k)}_3\tan\gamma^{(k)})^2\right].
\label{eqn:ellell}
\end{equation}
where we have recalled $r^{(k)}$ from Equation~\ref{eqn:rk}.
This implies that for $S_3=s_3$, all particles lying in the $d$-th L-bin,
will lie in the space of $V_1$ and $V_2$,  along an ellipse centred at 
$(0, v_3^{(k)}\tan\gamma^{(k)})$ with semi-minor axis lying in the interval of
$[\lambda_{d+1},\lambda_d]$ and semi-major axis lying in the interval
$\displaystyle{\left[\frac{\lambda_{d+1}}{\cos\gamma^{(k)}},\frac{\lambda_d}{\cos\gamma^{(k)}}\right]}$. Here,
\begin{eqnarray}
\lambda_{d+1} &:=& \displaystyle{\frac{\ell_{d+1}}{\sqrt{(s_1^{(k)})^2+(s_2^{(k)})^2+s_3^2}}} \nonumber \\
\lambda_{d} &:=& \displaystyle{\frac{\ell_{d}}{\sqrt{(s_1^{(k)})^2+(s_2^{(k)})^2+s_3^2}}}
\end{eqnarray}

Collating the implications of Equation~\ref{eqn:engcirc} and
Equation~\ref{eqn:ellell}, we get that at a given value of $S_3$,
particles with observed data $(s_1^{(k)},s_2^{(k)},v_3^{(k)})^T$, (with
energies) in the $c$-th $E$-bin and (momenta) in the $d$-th $L$-bin will lie
in the space of $V_1$ and $V_2$, within an area bound by the overlap
of
\begin{enumerate}
\item[--] the circular annular region centred at $V_1=0, V_2=0$, extending in radii between $\varepsilon_{c+1}$ and $\varepsilon_c$.
\item[--] the elliptical annular region centred at $V_1=0, V_2=v_3^{(k)}\tan\gamma$, extending in semi-minor between $\lambda_{d+1}$ and $\lambda_d$ and semi-major axis in $[\lambda_{d}/\cos\gamma,\lambda_{d+1}/cos\gamma]$, where $\cos\gamma=\displaystyle{\frac{s_3}{\sqrt{(s_1^{(k)})^2+(s_2^{(k)})^2+s_3^2}}}$.
\end{enumerate}
The area of these overlapping annular regions represents the volume of
the $cd$-th $E-L$-grid-cell in the space of $V_1$ and $V_2$, at the
value $s_3$ of $S_3$. Thus, the first step towards writing the volume 
of the $cd$-th $E-L$-grid-cell in terms of the unobservables, is to compute the area of these overlapping annular regions in the space of $V_1$ and $V_2$. Such an area of overlap is a function of $s_3$. At the next step, we integrate such an area over all allowed $s_3$, to recover the volume of the $cd$-th $E-L$-grid-cell in the space of $V_1$, $V_2$ and $S_3$, i.e. the integral on the RHS of Equation~\ref{eqn:likeli_integ}.

There can be multiple ways these annular regions overlap; three
examples of these distinct overlapping geometries are displayed in
Figure~\ref{fig:overlap}. In each such geometry, it is possible to
compute the area of this region of overlap since we know the equations
of the curves that bound the area. However, the number of possible
geometries of overlap is in excess of 20 and identifying the
particular geometry to then compute the area of overlap in each such
case, is tedious to code. In place of this, we allow for a numerical
computation of the area of overlap; this method works irrespective of
the particulars of the geometry of overlap. We identify the maximum
and minimum values of $V_2$ allowed at a given value of $V_1$, having
known the equations to the bounding curves, and compute the area of
overlap in the plane of $V_1$ and $V_2$ using numerical integration.

This area of overlap in the plane defined by $V_1$ and $V_2$ is a
function of $S_3$ since the equations of the bounding curves are
expressed in terms of $s_3$. The area of overlap is then integrated
over all values that $S_3$ is permitted to take inside the $cd$-th
$E-L$-grid-cell. For any $E-L$-grid-cell, the lowest value $S_3$ can
take is zero. For $\epsilon\in[\epsilon_{c+1},\epsilon_{c}]$, and $\ell\in[\ell_d,\ell_{d+1}]$, the
maximum value of $S_3$ is realised (by recalling Equation~\ref{eqn:elrevamp}) as the solution to
the equation
\begin{equation}
\displaystyle{2(\epsilon_{c} - \Phi(r))} = \displaystyle{\frac{\ell_d^2}{r^2} + v_{nc}^2}
\label{eqn:zeqn}
\end{equation}

\begin{figure}[ht!]
     \begin{center}
{
   {
    \includegraphics[width=10cm]{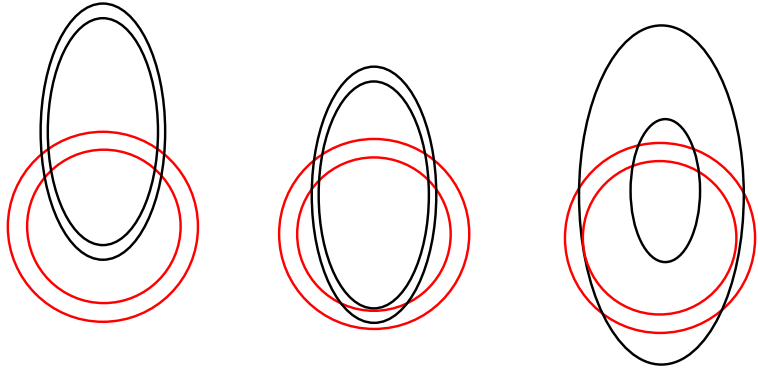} 
   }
}
\end{center}
\caption{Figure showing 3 of the many ways of overlap between the contours drawn in the space of $V_1$ and $V_2$, at neighbouring values of $E$ (the circular contours in red) and at neighbouring values of $L$ (the elliptical contours in black).}
\label{fig:overlap}
\end{figure}

where $v_{nc}$ is the projection of $\bv$ along the $\bs$ vector (discussed in Section~\ref{sec:relation}). Thus, $v_{nc}$ is given by the inner product of $\bv$ and the unit vector parallel to $\bs$:
\begin{equation}
v_{nc} = \displaystyle{\frac{\bv\cdot\bs}{\parallel\bs\parallel}}, 
\label{eqn:vr}
\end{equation}
where $\parallel\bs\parallel\equiv r$. Under our choice of coordinate system,
Equation~\ref{eqn:vr} gives
\begin{eqnarray}
v_{nc} &=& \displaystyle{\frac{v_2 s_2}{r} + \frac{v_3 s_3}{r}}\nonumber \\
  &=& \displaystyle{{v_2 \sin\gamma} + v_3\cos\gamma}\quad\mbox{where}\nonumber \\
\cos\gamma &:=& \displaystyle{\frac{s_3}{r}}
\end{eqnarray}
Using this in Equation~\ref{eqn:zeqn} we get
\begin{equation}
\displaystyle{2r^2(\epsilon_c - \Phi(r))} = \displaystyle{{\ell_d^2} + v_2^2 s_2^2 + v_3^2 s_3^2 + 2v_3 v_2 s_2 s_3}.
\label{eqn:zeqn2}
\end{equation}
This implies that given the observations represented by the $k$-th data vector $(s_1^{(k)}, s_2^{(k)}, v_3^{(k)})$, 
\begin{eqnarray}
\displaystyle{2\left[(s_1^{(k)})^2 + (s_2^{(k)})^2 + s_3^2\right]\left[\epsilon_c - \Phi(r)\right]} =&&\nonumber \\ 
\displaystyle{{\ell_d^2} + v_2^2 (s_2^{(k)})^2 + (v_3^{(k)})^2 s_3^2 + 2v_3^{(k)} v_2 s_2^{(k)} s_3}.&&
\label{eqn:zeqn3}
\end{eqnarray}
The highest positive root for $s_3$ from Equation~\ref{eqn:zeqn3} as the highest value that $S_3$ can attain in the $cd$-th $E-L$-grid-cell. Thus, for the $cd$-th cell, the limits on the integration over $s_3$ are 0 and the solution to Equation~\ref{eqn:zeqn3}. 

So now we have the value of the integral over $v_1$ and $v_2$ and
hereafter over $s_3$, for the $cd$-th $E-L$-grid-cell. This triple
integral gives the volume of the $cd$-th $E-L$-grid-cell in the space
of the unobservables, i.e. of $V_1, V_2, S_3$. This volume is
multiplied by the value $f_{c,d}$ of the discretised $pdf$ of the state space variable
in this $E-L$ cell and the resulting product is summed over all $c$
and $d$, to give us the marginalised $pdf$ $\nu(s_1^{(k)},
s_2^{(k)}, v_3^{(k)})$ (see Equation~\ref{eqn:likeli_integ}). Once the
marginalised $pdf$ is known for a given $k$, the product over all $k$s
contributes towards the likelihood.


\subsection{Normalisation of the marginal $pdf$ of the state space vector}
\noindent
As we see from Equation~\ref{eqn:likeli_integ}, the marginal $pdf$ of
$\bS$ and $\bV$ is dependent on $\brho$, so this normalisation will
not cancel within the implementation of Metropolis-Hastings to perform
posterior sampling. In other words, to ensure that the value of
$\nu(\cdot,\cdot,\cdot)$ - and therefore the likelihood - is not
artificially enhanced by choosing a high $\brho$, we normalise
$\nu(s_1^{(k)}, s_2^{(k)}, v_3^{(k)})$ for each $k$, by the $pdf$
integrated over all possible values of $S_1$, $S_2$ and $V_3$, i.e. by
\begin{eqnarray}
\displaystyle{\int_{S_1}\int_{S_2}\int_{V_3} \nu(s_1,s_2,v_3) ds_1 ds_2 dv_3}
\end{eqnarray}
where the possible values of $V_3$ are in the interval
$[-\sqrt{-2\Phi(s_1^2+s_2^2)}, \sqrt{-2\Phi(s_1^2+s_2^2)}]$, of $S_2$
in the interval $[\sqrt{s_1^2 - r_{min}^2}, \sqrt{r_{max}^2-s_1^2}]$
and of $S_1$ in $[r_{min},r_{max}]$. Hereafter, by $\nu(\cdot,\cdot,\cdot)$
we will imply the normalised marginal $pdf$.

\subsection{Incorporating measurement uncertainties}
\noindent
Following Equation~\ref{eqn:likeli_fin2} the likelihood is defined as
the product over all data, of the convolution of the error
distribution at the $k$-th datum and value of the marginalised $pdf$
for this $k$ (assuming the data to be conditionally $iid$). In this
application the measurement of the location of the galactic particle
projected onto the image plane of the galaxy, i.e. $(S_1,S_2)$, is
constrained well enough to ignore measurement uncertainties
in. However, the measurement errors in the line-of-sight component of
the particle velocity, $V_3$, can be large. This measurement error in
$V_3$ is denoted as $\delta V_3$. The distribution of this error is
determined by the astronomical instrumentation relevant to the
observations of the galaxy at hand and are usually known to the
astronomer. In the implementation of the methodology to real and
simulated data, as discussed below, we work with a Gaussian error
distribution with a known variance $\sigma^2_{V_3}$. Thus, $\delta
V_3\sim {\cal N}(0, \sigma^2_{V_3})$. For this particular error
distribution, the likelihood is defined as {{
\begin{equation}
\Pr(\{(s_1^{(k)}, s_2^{(k)}, v_3^{(k)})^T\}_{k=1}^{N_{data}}\vert\brho,{\bf F}) =
{\displaystyle{\prod_{k=1}^{N_{data}} \nu(s_1^{(k)}, s_2^{(k)}, v_3^{(k)})\ast\left[\frac{1}{\sigma_{v_3^{(k)}}}\exp\left(\frac{-(v_3^{(k)})^2}{2\sigma^2_{v_3^{(k)}}}\right)\right]}}.
\end{equation}}}
For any other distribution of the uncertainties in the measurement of
$V_3$, the likelihood is to be rephrased as resulting from a
convolution of $\nu(\cdot,\cdot,\cdot)$ and that chosen error
distribution.

\subsection{Priors}
\label{sec:priors}
\noindent
In the existing astronomical literature, there is nothing to suggest
the $pdf$ of the state space variable in a real galaxy though there are theoretical
models of the functional dependence between stellar energy ($E$) and
angular momentum ($L$) and $pdf$ of $\bS$ and $\bV$ \ctp{BT}. Given this, we
opt for uniform priors on $f_{c,d}$, $c=1,2,\ldots,N_\epsilon$,
$d=1,2,\ldots,N_\ell$. However, in our inference, we will use the
suggestion of monotonicity of the state space $pdf$, as given in the
theoretical galactic dynamics literature. We also use the physically
motivated constraint that $f_{c,d}\geq 0$, $\forall\:c,d$. Thus, we
use $f_{c,d}\sim{\cal U}(1,0)$, where ${\cal U}(\cdot,\cdot)$ denotes
the uniform distribution over the interval $[\cdot,\cdot]$.

As far as priors on the gravitational mass density are concerned,
astronomical models are available \ctp{BT}. All such models suggest that
gravitational mass density is a monotonically decreasing function of $R$.
A numerically motivated
form that has been used in the astrophysical community is referred to
as the NFW density \ctp{NFW}, though criticism of predictions obtained
with this form also exist \ctp[among others]{deblok2003}. For our purpose we suggest a
uniform prior on $\rho_b$ such that 
\begin{eqnarray}
\pi_0(\rho_b) &=& \displaystyle{\frac{1}{\Upsilon_{hi}^{(b)}(R_s,\rho_0) -\Upsilon_{lo}^{(b)}(R_s,\rho_0)}}\quad{\mbox{where}} \nonumber \\
\Upsilon_{lo}^{(b)}(R_s,\rho_0) &=& 10^{-3} \rho_{NFW}^{(b)}(R_s,\rho_0) \nonumber \\ 
\Upsilon_{hi}^{(b)}(R_s,\rho_0) &=& 10^{3} \rho_{NFW}^{(b)}(R_s,\rho_0)\quad{\mbox{with}} \nonumber\\
\rho_{NFW}^{(b)}(R_s,\rho_0) &:=& \displaystyle{\frac{\rho_0}{\frac{r_b}{R_s}\left(1 + \frac{r_b}{R_s}\right)^2}}, \nonumber \\
r_b &:=& r_{min}+(b-0.5)\delta_r 
\label{eqn:denprior}
\end{eqnarray}
i.e. $\rho_{NFW}^{(b)}(R_s,\rho_0)$ is the gravitational mass density
as given by the 2-parameter NFW form, for the particle radial location
$r\in[r_{min}+(b-1)\delta_r, r_{max}+b\delta_r$,
  $b=1,2,\ldots,N_x$. In fact, this location is summarised as $r_b$,
  the mid-point of the $b$-th radial bin. $R_s$ and $\rho_0$ are the 2
  parameters of the NFW density form. In our work these are
  hyperparameters and we place uniform priors on them:
  $\pi_0(R_s)=1/(r_{max}-r_{min})$ and
  $\pi_0(\rho_0)=1/(10^{14}-10^{9})$, where these numbers are
  experimentally chosen.

\subsection{Posterior}
\label{sec:posterior}
\noindent
Given the data, we use Bayes rule to write down the joint posterior probability density of \\
$\rho_1,\rho_2,\ldots,\rho_{N_x},f_{1,1},\ldots,f_{N_\epsilon,1},f_{1,2},\ldots,f_{N_\epsilon,N_\ell},R_s,\rho_0,N_\ell$. This is 
\begin{eqnarray}
\pi(\brho,{\bf F},R_s,\rho_0,N_\ell\vert {\bu}_1,{\bu}_2,\ldots,\bu_{N_{data}})\propto&& \nonumber \\
\displaystyle{\prod_{k=1}^{N_{data}} \left[\nu(s_1^{(k)}, s_2^{(k)}, v_3^{(k)})\ast\frac{1}{\sigma_{v_3^{(k)}}}\exp\left(\frac{-(v_3^{(k)})^2}{2\sigma^2_{v_3^{(k)}}}\right)\right]}\times &&\nonumber \\      
\displaystyle{\prod_{b=1}^{N_{x}}\left[\frac{1}{\Upsilon_{hi}^{(b)}(R_s,\rho_0) -\Upsilon_{lo}^{(b)}(R_s,\rho_0)}\right]} \times&&  \nonumber \\      
\displaystyle{\frac{1}{r_{max}-r_{min}}}\times\displaystyle{\frac{1}{10^{14}-10^{9}}} \times \displaystyle{\frac{1}{5}}.&&
\end{eqnarray}
where we used $\pi_0(f_{c,d})=1$,
$\forall\:c=1,2,\ldots,N_{\epsilon},\:d=1,2,\ldots,N_\ell$. Here, the
factor
$\displaystyle{\frac{1}{r_{max}-r_{min}}}\times\displaystyle{\frac{1}{10^{14}-10^{9}}}
\times \displaystyle{\frac{1}{5}}$ is a constant and therefore can be
subsumed into the constant of proportionality that defines the above
relation. 

We marginalise $\rho_0$ and $R_s$ out of \\$\pi(\brho,{\bf
  F},R_s,\rho_0,N_\ell\vert {\bu}_1,{\bu}_2,\ldots,\bu_{N_{data}})$ to
achieve the joint posterior probability of $\brho$, ${\bf F}$ and
$N_\ell$ given the data. The marginalisation involves only the term
$\displaystyle{\prod_{b=1}^{N_{x}}\left[\frac{1}{\Upsilon_{hi}^{(b)}(R_s,\rho_0)
      -\Upsilon_{lo}^{(b)}(R_s,\rho_0)}\right]}=$\\$\displaystyle{\prod_{b=1}^{N_{x}}\left[\frac{(r_{min}-0.5\delta_r+b\delta_r)(R_s+r_{min}-0.5\delta_r+b\delta_r)^2}{\rho_0
      R_s^3(10^3-10^{-3})}\right]}$ (recalling Equation~\ref{eqn:denprior}). Integrating this term over a
fixed interval of values of $R_s$ and again over a fixed interval of
$\rho_0$, result in a constant that depends on $N_{data}$, $r_{min}$
and $\delta_r$. Thus the marginalisation only results in a constant
that can be subsumed within the unknown constant of proportionality
that we do not require the exact computation of, given that posterior
samples are generated using adaptive Metropolis-Hastings \ctp{haario}. Thus we can write down the joint posterior probability of $\brho$, ${\bf F}$ and $N_\ell$ given the data as:
{{
\begin{equation}
\pi(\brho,{\bf F},N_\ell\vert {\bu}_1,{\bu}_2,\ldots,\bu_{N_{data}})\propto
\displaystyle{\prod_{k=1}^{N_{data}} \left[\nu(s_1^{(k)}, s_2^{(k)}, v_3^{(k)})\ast\frac{1}{\sigma_{v_3^{(k)}}}\exp\left(\frac{-(v_3^{(k)})^2}{2\sigma^2_{v_3^{(k)}}}\right)\right]} 
\label{eqn:posteriorfinal}
\end{equation}
}}
We discuss the implemented inference next.

\section{Inference}
\label{sec:inference}
\noindent
We intend to make inference on each component of the vector $\brho$
and the matrix ${\bf F}$, along with $N_\ell$. We do this under the
constraints of a gravitational mass density function $\rho(R)$ that is
non-increasing with $R$ and a $pdf$ $f(E,L)$ of the state space variable that is
non-increasing with $E$. Motivation for these constraints is presented
in Section~\ref{sec:priors}. In other words, $\rho_b \geq \rho_{b+1}$ and
$f_{c,d} \leq f_{c+1,d}$ for $b=1,2,\ldots,N_x$ and $\rho_{N_x+1}:=0$.
Also, here $c=1,2,\ldots,N_\epsilon-1$ and $d=1,2,\ldots,N_\ell$. 

First we discuss performing inference on $\brho$ using adaptive
Metropolis-Hastings \ctp{haario}, while maintaining this constraint of
monotonicity. We define
\begin{eqnarray}
\Delta_b:= \rho_{b}-\rho_{b+1}, \quad b=1,2,\ldots,N_{x},&&\nonumber \\
\mbox{with}\quad\rho_{N_x+1}:=0.&&
\end{eqnarray}
It is on the parameters $\Delta_1, \Delta_2,\ldots, \Delta_{N_x-1}$
that we make inference. Let within our inference scheme, at the $n$-th
iteration, the current value of $\Delta_b$ be $\delta_b^{(n)}$. Let in
this iteration, a candidate value $\tilde{\delta_b}^{(n)}$ of
$\Delta_b$ be proposed from the folded normal density ${\cal
  N}_{folded}(\mu_b, \sigma^2_b)$, i.e.
\begin{equation}
\tilde{\delta_b}^{(n)} \sim {\cal N}_{folded}(\mu_b, \sigma^2_b)
\end{equation}
where the choice of a folded normal \ctp{folded} or truncated normal proposal
density is preferred over a density that achieves zero probability
mass at the variable value of 0. This is because there is a non-zero
probability for the gravitational mass density to be zero in a given
radial bin. Here $\mu_b$ and $\sigma_b^2$ are the mean and variance of
the proposal density that $\Delta_b$ is proposed from. We choose the
current value of $\Delta_b$ as $\mu_b$ and in this adaptive inference
scheme, the variance is given by the empirical variance of the chain
since the $n_0$-th iteration, i.e.
\begin{eqnarray}
\mu_b &=& \delta_b^{(n)}\nonumber \\
\sigma_b^2 &=& \displaystyle{\frac{\sum_{q=n_0}^{n-1} \left[\delta_b^{(q)}\right]^2}{n-n_0}} - \displaystyle{\left[\frac{\sum_{q=n_0}^{n-1} \delta_b^{(q)}}{n-n_0}\right]^2}
\end{eqnarray}
We choose the folded normal proposal density given its ease of computation:
\begin{equation}
q(\tilde{\delta_b}^{(n)}; \mu_b,\sigma_b^2) = 
\displaystyle{\frac{1}{2\pi\sigma_b}\exp\left[-\frac{(\tilde{\delta_b}^{(n)}-\mu_b)^2}{2\sigma_b^2} -\frac{(\tilde{\delta_b}^{(n)}+\mu_b)^2}{2\sigma_b^2}\right]} 
\end{equation}
It is evident that this is a symmetric proposal density. We discuss
the acceptance criterion in this standard Metropolis-Hastings scheme,
after discussing the proposal density of the components of the matrix
${\bf F}$ and the parameter $N_\ell$.

If $\tilde{\delta_b}^{(n)}$ is accepted, then the updated $b$-th
component of $\brho$ in the $n$-th iteration is ${\rho_b}^{(n)} =
{\rho_{b+1}}^{(n)} + \tilde{\delta_b}^{(n)}$. If the proposed
candidate is rejected then ${\rho_b}^{(n)}$ resorts back to
${\rho_b}^{(n)} = {\rho_{b+1}}^{(n)} + {\delta_b}^{(n)}$. 

Along similar lines, we make inference directly on 
\begin{eqnarray}
\Gamma_{c,d}=f_{c,d}-f_{c+1,d},&&\nonumber \\ 
c=1,2,\ldots,N_\epsilon-1,\:d=1,2,\ldots,N_\ell,&&\nonumber \\
f_{N_\epsilon,d}=0.&&
\end{eqnarray}
Let in the $n$-th iteration, the current value of $\Gamma_{c,d}$ be 
$\gamma_{c,d}^{(n)}$ and the proposed value be 
$\tilde{\gamma}_{c,d}^{(n)}$ where the proposed candidate is sampled
from the folded normal density ${\cal N}_{folded}(\gamma_{c,d}^{(n)},(\tau^{(n)}_{c,d})^2)$ where the variance $(\tau^{(n)}_{c,d})^2$ is again the empirical variance of the chain between the $n_0^{/}$-th and the $n-1$-th iteration. Then
the updated general element of the state space $pdf$ matrix in this iteration is $f^{(n)}_{c,d} = f^{(n)}_{c+1,d}+\tilde{\gamma}_{c,d}^{(n)}$, if the proposed value as accepted, otherwise, $f^{(n)}_{c,d} = f^{(n)}_{c+1,d}+{\gamma_{c,d}}^{(n)}$. Thus, the proposal density that a component of the ${\bf F}$ matrix is proposed from is also symmetric.

We propose $N_\ell$ from the discrete uniform distribution, i.e. the proposed 
value of $N_\ell$ in the $n$-th iteration is 
\begin{equation}
N_\ell \sim {\cal U}_{discrete}[z_1, z_2]
\end{equation}
where the bounds of the interval $[z_1,z_2]$ are found experimentally given the data at hand. 

Given that we are making inference on the $\{\Delta_b\}_{b=1}^{N_x}$
and $\{\Gamma_{c,d}\}_{c=1,d=1}^{N_\epsilon,N_\ell}$, we rephrase the
posterior probability of the unknowns as\\
$\pi(\Delta_1,\ldots,,\Delta_{N_x},\Gamma_{1,1},\ldots,\Gamma_{N_\epsilon,N_\ell},N_\ell\vert\bu_1,\ldots,\bu_{N_{data}})$. This
posterior density is proportional to the RHS of
Equation~\ref{eqn:posteriorfinal}.

Then given that the proposal densities that components of $\brho$ and of ${\bf F}$ are sampled from and that the proposal density for $N_\ell$ is uniform, the Metropolis-Hastings acceptance ratio is reduced to the ratio of the posterior of the proposed state space vector value to that of the current state space vector, i.e. the proposed state space vector
$(\tilde{\Delta_1},\ldots,,\tilde{\Delta}_{N_x},\tilde{\Gamma}_{1,1},\ldots,\tilde{\Gamma}_{N_\epsilon,\tilde{N}_\ell},\tilde{N}_\ell)^T$
is accepted if 
\begin{equation}
\displaystyle{\frac{\pi(\tilde{\Delta_1},\ldots,\tilde{\Delta}_{N_x},\tilde{\Gamma}_{1,1},\ldots,\tilde{\Gamma}_{N_\epsilon,\tilde{N}_\ell},\tilde{N}_\ell\vert \bu_1,\ldots,\bu_{N_{data}})}{\pi({\Delta_1},\ldots,{\Delta_{N_x}},{\Gamma_{1,1}},\ldots,{\Gamma_{N_\epsilon,N_\ell}},N_\ell\vert \bu_1,\ldots,\bu_{N_{data}})}} < u
\end{equation}
where the uniform random variable $u\sim{\cal U}[0,1]$.

\section{Illustration on synthetic data}
\label{sec:synthetic}
\noindent
In this section we illustrate the methodology on synthetic data set simulated from a chosen models for the $pdf$ of $\bW=(\bS^T,\bV^T)^T$. $N_{data}$=198. The chosen models for this $pdf$ are $f_{WD}(E(\bS,\bV),L(\bS,\bV))$ or $f_{WD}(E,L)$ and $f_{Michie}(E,L)$. These are given by:
\begin{eqnarray}                                                              
f_{WD}(E, L) =&& \displaystyle{\frac{1}{\sqrt{2\pi\sigma^2}}}\displaystyle{\exp\left(-\frac{L^2}{r_a\sigma^2}\right)\exp\left(\frac{-E}{\sigma^2}\right)}, \nonumber \\       
f_{Michie}(E, L) =&& 
\displaystyle{\frac{1}{\sqrt{2\pi\sigma^2}}}\displaystyle{\exp\left(-\frac{L^2}{r_a\sigma^2}\right)\left[\exp{\left(\frac{-E}{\sigma^2}\right)} - 1 \right]}, 
\end{eqnarray}                    
where $E(\bS,\bV)=\Phi(\rho_{Model}(R))+V^2/2$ with $\rho_{Model}(R)$
chosen in both models for the state space $pdf$ to be
$\rho_{Model}(R)=\displaystyle{\left(\frac{3M}{4\pi a^3}\right)\left(1
  + \frac{r^2}{a^2}\right)^{-5/2}}$. Here the model parameters $r_a>0$
and $\sigma\>0$ are assigned realistic numerical values. From these 2
chosen $pdf$s, $N_{data}$ values of $\bU$ were sampled; these 2
samples constituted the 2 synthetic data sets ${\bf D}_{WD}$ and ${\bf
  D}_{Michie}$. The learnt gravitational mass density parameters and
discretised version of the state space $pdf$ are displayed in
Figure~\ref{fig:syn1}. Some of the convergence characteristics of the
chains are explored in Figure~\ref{fig:syn2}. The trace of the joint
posterior probability of the unknown parameters given the data is
shown along with histograms of $\rho_2$ learnt from 3 distinct parts
of the chain that is run using data ${\bf D}_{WD}$.

\begin{figure}[!t]                                             
\centering{
  $\begin{array}{c c}                                                         
   \includegraphics[height=.3\textheight]{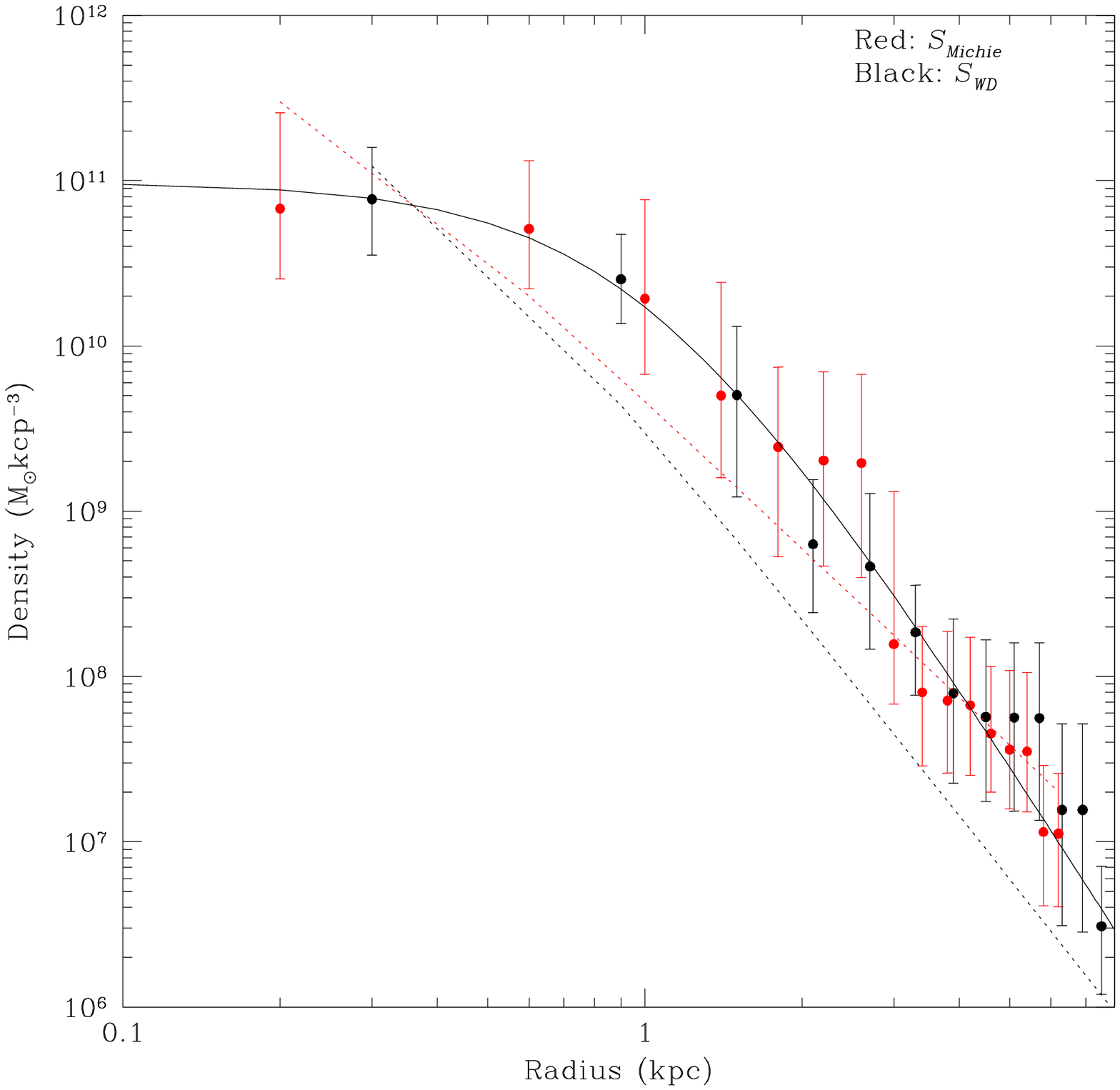} & 
   \includegraphics[height=.3\textheight]{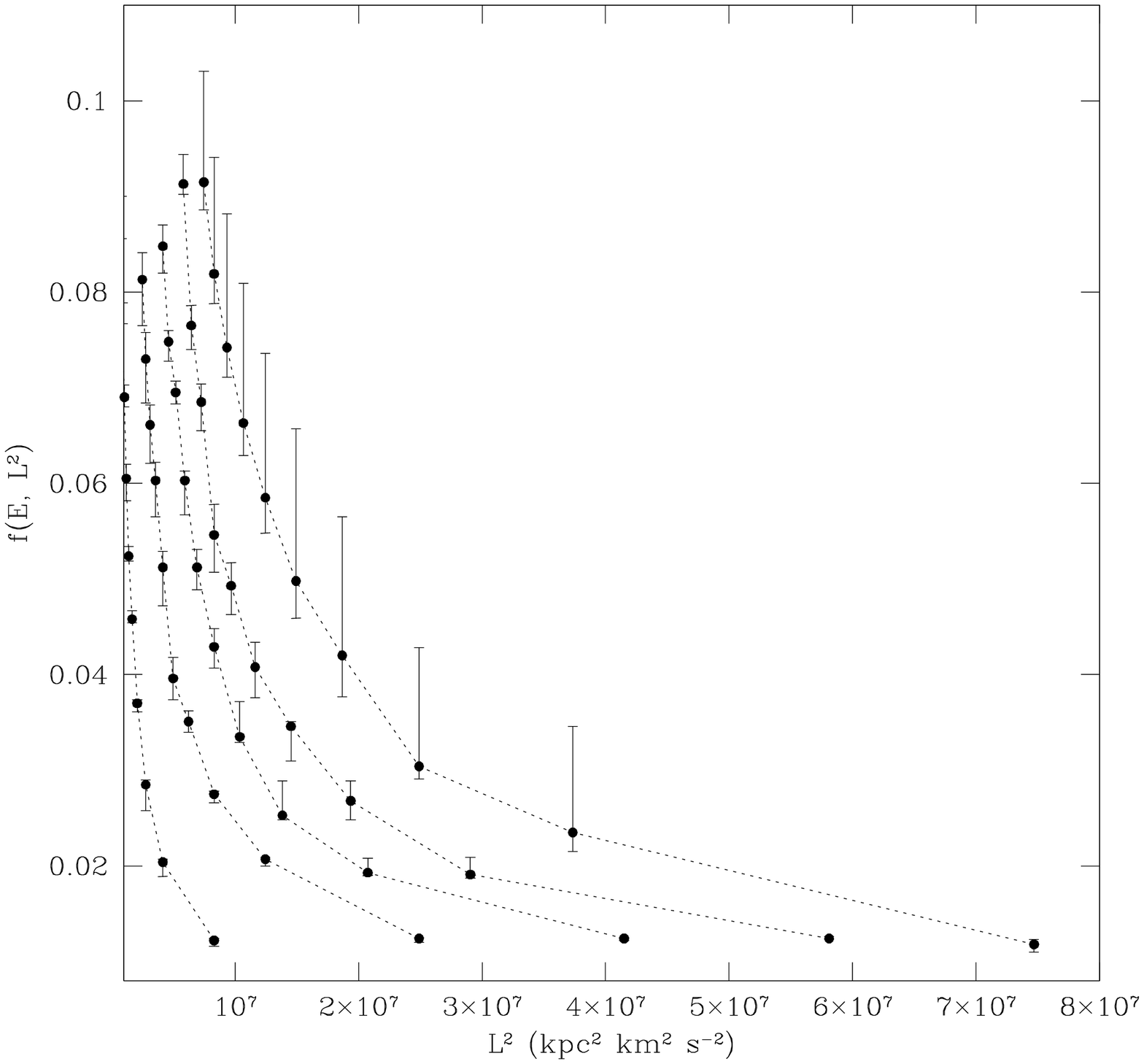}
   \end{array}$
}
\caption{{\it Left:} gravitational mass density parameters learnt
  using synthetic data sets ${\bf D}_{WD}$ and ${\bf D}_{Michie}$ that
  are sampled from the chosen models of the $pdf$ of the state space
  variable, at the chosen model of the gravitational mass density
  function $\rho_{Model}(R)$ which is shown in the black solid
  line. The 95$\%$ highest probability density (HPD) credible region
  is represented as the error bar on each estimated parameter while
  the parameter value at the mode of its marginal posterior probability
  is shown by the filled circle. The density parameters
  $\rho^{(b)}_{NFW}(R_s,\rho_0)$, $b=1,2,\ldots,N_x$, are joined with
  the dotted lines in red and black where the prior on the sought
  parameter $\rho_b$ is defined in terms of
  $\rho^{(b)}_{NFW}(R_s,\rho_0)$ (see
  Equation~\ref{eqn:denprior}). {\it Right}: discretised $pdf$ of
  $\bS$ and $\bV$ learnt using data ${\bf D}_{WD}$, plotted against
  $\ell^2$ i.e. square of the value of $L(\bS,\bV)$, at 5 different
  values of $E(\bS,\bV)$. The true values of the parameters are joined
  in dotted lines. }
\label{fig:syn1}
\end{figure}         

\begin{figure}[!t]
\centering
{
\vspace*{1.3cm}
\includegraphics[height=.32\textheight]{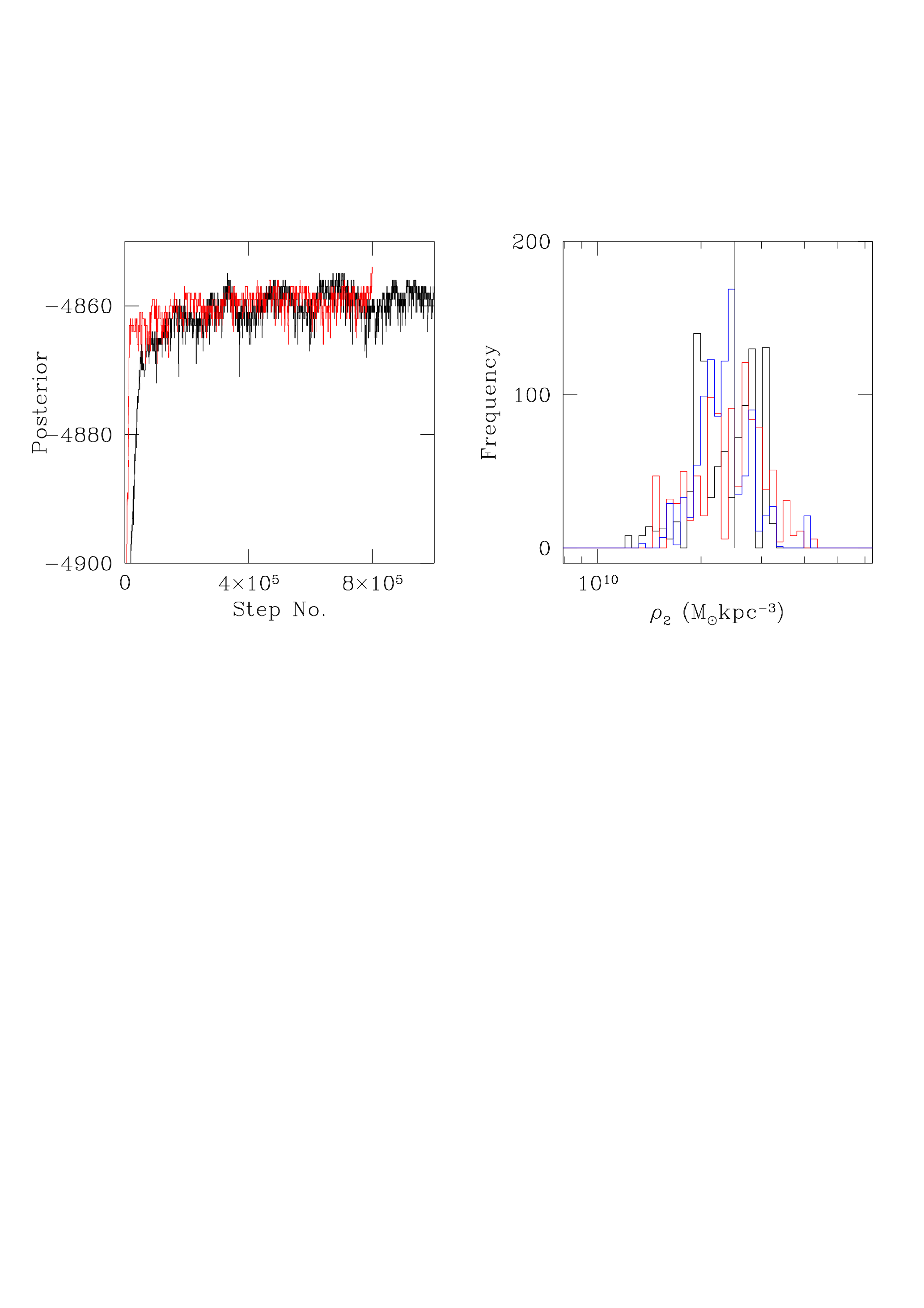}}
\caption{{\it Left}: Trace of the joint posterior probability density of all 
the unknowns, given the synthetic data sets ${\bf D}_{WD}$ and ${\bf D}_{Michie}$, in black and red. {\it Right}: Histograms of values of the parameter $\rho_2$  in 3 equally sized and non-overlapping parts of the chain run with ${\bf D}_{WD}$, where all 3 parts were sampled post burnin, between iteration number 600,000 and 800,000. The true value of $\rho_2$ is marked by the black solid line.\vspace*{1cm}}
\label{fig:syn2}
\end{figure}

\section{Illustration on real data}
\label{sec:real}
\noindent
In this section we present the gravitational mass density parameters
and the state space $pdf$ parameters learnt for the real galaxy
NGC3379 using 2 data sets ${\bf D}_{PNe}$ and ${\bf D}_{GC}$ which respectively
have sample size 164 \ctp{pns} and 29 \ctp{bergond}. An independent test of hypothesis exercise shows that there is relatively higher support in ${\bf D}_{GC}$ for an isotropic $pdf$ of the state space variable $\bW=(\bS^T,\bV^T)^T$ than in ${\bf D}_{PNe}$. Given this, some runs were performed using an isotropic model of the state space $pdf$; this was achieved by fixing the number $N_\ell$ of $L$-bins to 1. Then $L$ identically takes the value $\ell_1$ and is rendered a constant. This effectively implies that the domain of $f(E,L)$ is rendered uni-dimensional, i.e. the state space $pdf$ is then rendered $f(E)$. Recalling the definition of an isotropic function from Remark~\ref{remark:isotropic}, we realise that the modelled state space $pdf$ is then an isotropic function of $\bS$ and $\bV$. Results from chains run with such an isotropic state space $pdf$ were overplotted on results from chains run with the more relaxed version of the $pdf$ that allows for incorporation of anisotropy; in such chain, $N_ell$ is in fact learnt from the data.

\begin{figure}[!h]
\vspace*{1cm}
\begin{center}
\includegraphics[width=11cm]{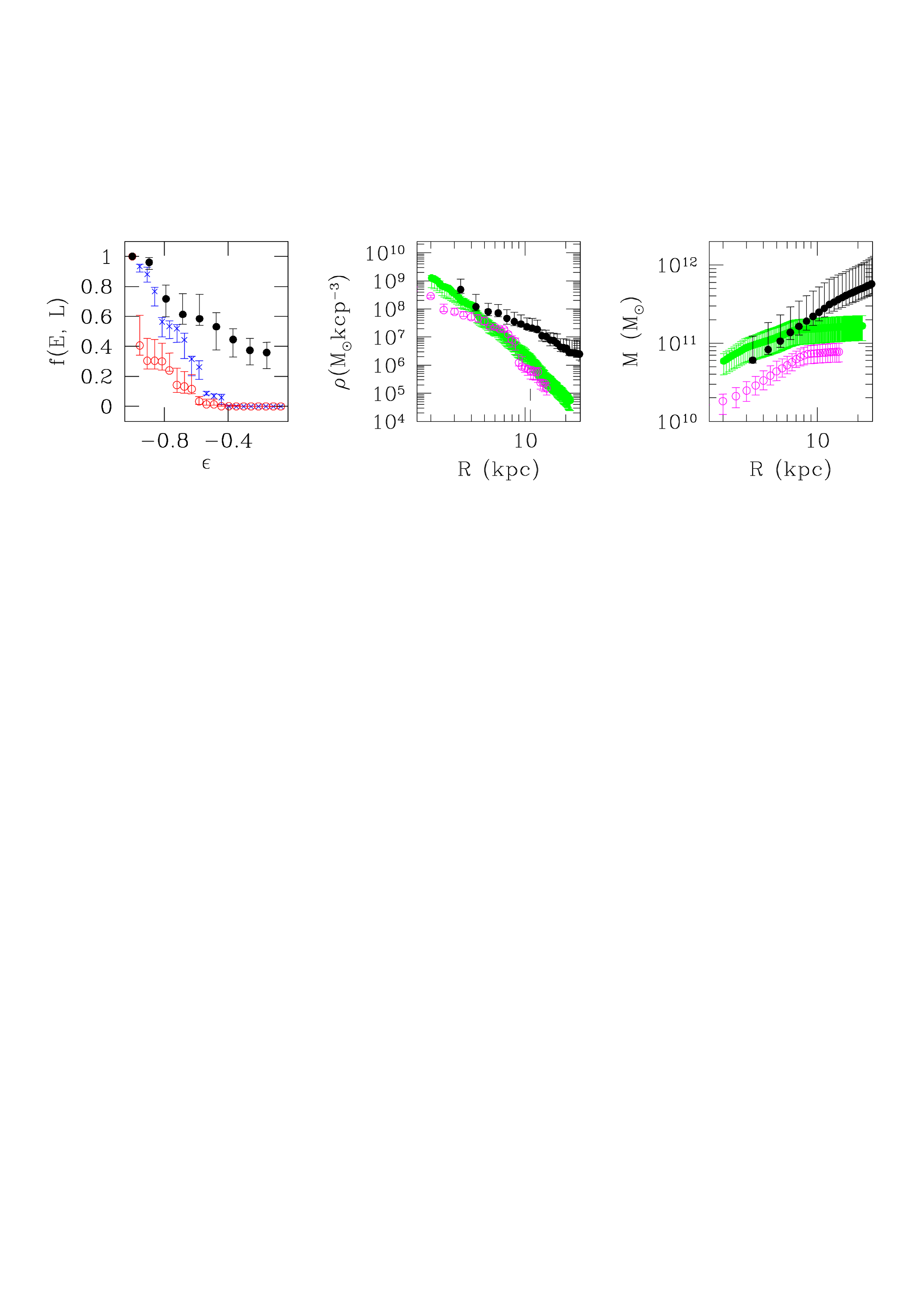}
\end{center}
\caption{{\it Left:} The left panel represents the $f(E,L)$ plotted as
  in red and blue against (the value of $E(\bS,\bV)$) $\epsilon$, at two
  different $\ell$, recovered from a chains that use data ${\bf
    D}_{PNe}$. The modal value of the learnt number of $L$-bins is 7
  for this run. The state space $pdf$ parameters recovered using data
  ${\bf D}_{GC}$ are shown in black. {\it Middle:} Gravitational mass
  density parameters $\rho_i$ estimated from a chain run with ${\bf
    D}_{PNe}$ are shown in magenta, over-plotted on the same obtained
  using the same data, from a chain in which the number of $L$-bins,
  $N_\ell=1$. When $N_\ell$ is fixed as 1, it implies that $L(\bS,\bV)$
  is then no longer a variable and then $f(E,L)$ is effectively
  univariate, depending on $E(\bS,\bV)$ alone. Such a state space
  $pdf$ is an isotropic function of $\bS$ and $\bV$ (see
  Remark~\ref{remark:isotropic}). The $\rho_i$ estimated from such an
  isotropic $pdf$ of the state space variable is shown here in
  green. The mass density parameters learnt using the data ${\bf D}_{GC}$--again learnt from an isotropic state space $pdf$--are
  shown in black. {\it Right:} Figure showing estimates of $M_i =
  \displaystyle{\sum_{j=1}^{i} 4\pi\rho_j\delta_r^2 (j^2 - (j-1)^2)}$,
    against $R$. Here $i=1,2,\ldots,N_x$. The parameters in magenta
    are obtained from the same chain that produce the $\rho_i$
    parameters in the middle panel using ${\bf D}_{Pne}$ while those
    in green and black are obtained using the $\rho_i$ that were
    represented in the middle panel in the corresponding colours.}
\label{fig:anisotropy} 
\end{figure}

\section{Discussions}
\label{sec:discussions}
\noindent
In this work we focused on an inverse problem in which noisy and
partially missing data on the measurable $\bU=(X_1,X_2,V_3)^T$ is used
to make inference on the model parameter vector $\brho$ which is the
discretisation of the unknown model function
$\rho(\parallel\bX\parallel)\equiv\rho(\parallel\bS\parallel)$, where
$\bS$ is an orthogonal transformation of $\bX$ and
$\bX=(X_1,X_2,X_3)^T$. The measurable and the sought function are
related via an unknown function. Given that the very Physics that
connects $\bU$ to $\rho(R)$ is unknown--where
$R:=\parallel\bS\parallel$--we cannot construct training data,
i.e. data comprising a set of computed $\bu$ for a known $\rho(r)$. In
the absence of training data, we are unable to learn the unknown
functional relationship between data and model function, either using
splines/wavelets or by modelling this unknown function with a Gaussian
Process. We then perform the estimation of $\rho(R)$ at chosen values
of $R$, i.e. discretise the range of values of $R$ and estimate
the vector $\brho$ instead, where $\rho_i$ is the value of $\rho(r)$
for $r$ in the $i$-th $R$-bin. We aim to write the posterior of
$\brho$ given the data. The likelihood could be written as the product
of the values of the $pdf$ of the state space vector
$\bW=(\bX^T,\bV^T)^T$ achieved at each data point, but the data being
missing, the $pdf$ is projected onto the space of $\bU$ and the
likelihood is written in terms of these projections of the
$pdf$. $\brho$ is embedded within the definition of the domain of the
$pdf$ of $\bW$. The projection calls for identification of the mapping
between this domain and the unobserved variables $X_3, V_1, V_2$; this
is an application specific task. The likelihood is convolved with the
error distribution and vague but proper priors are invoked, leading to
the posterior probability of the unknowns given the data. Inference is
performed using adaptive MCMC. The method is used to learn the
gravitational mass density of a simulated galaxy using synthetic data,
as well as that in the real galaxy NGC3379, using data of 2 different
kinds of galactic particles. The gravitational mass density vector
estimated from the 2 independent data sets are found to be distinct.

The distribution of the gravitational mass in the system is indicated by the function $M(r)=\displaystyle{\int_{r'=0}^r 4\pi (r')^2 \rho(r') dr'}$. the discretised form of this function defines the parameters $M_i$, $i=1,2,\ldots,N_x$. These are computed using the learnt value of the $\rho_i$ parameters and plotted in Figure~\ref{fig:anisotropy}. We notice that the estimate of $\rho_i$ can depend on the model chosen for the state space $pdf$; thus, the same galaxy can be inferred to be characterised by a higher gravitational mass distribution depending on whether an isotropic state space is invoked or not. Turning this result around, one can argue that in absence of priors on how isotropic the state space of a galaxy really is, the learnt gravitational mass density function might give an erroneous indication of how much gravitational mass there is in this galaxy and of corse how that mass is distributed. It may be remarked that in lieu of such prior knowledge about the topology of the system state space, it is best to consider the least constrained of models for the state space $pdf$, i.e. to consider this $pdf$ to be dependent on both $E(\bS,\bV)$  and $L(\bS,\bV)$.

It is also to be noted that the estimate for the gravitational mass density in the real galaxy NGC3379 appears to depend crucially on which data set is being implemented in the estimation exercise. It is possible that the underlying $pdf$ of the variable $\bW=(\bS^T,\bV^T)^T$ is different for the sub-volume of state space that one set of data vectors are sampled from, compared to another. As these data vectors are components of $\bS$ and $\bV$ of different kinds of galactic particles, this implies that the state space $pdf$ that the different kinds of galactic particles relax into, are different.

\renewcommand\baselinestretch{1.3}
\normalsize


\end{document}